\newtheorem{definition}{Definition}
\newtheorem{proposition}{Proposition}
\newtheorem{theorem}{Theorem}
\newtheorem{corollary}{Corollary}
\newtheorem{lemma}{Lemma}
\newtheorem{property}{Property}
\newcommand{\Broder}{Broder \etal}
\newcommand{\etal}{\textsl{et al.}\xspace}
\renewcommand{\O}{\ensuremath \widetilde O}
\newcommand{\Tt}{\ensuremath \widetilde \Theta}
\newcommand{\To}{\ensuremath \widetilde \Omega}
\newcommand{\s}{\ensuremath  \mathscr S}
\renewcommand{\t}{\ensuremath \mathscr  T}
\newcommand{\E}{\ensuremath \mathbb E}
\renewcommand{\Pr}{\ensuremath \mathrm {Pr}}
\renewcommand{\*}{\hspace*{5mm}}
\author{Adrian Kosowski%
\footnote{Inria Bordeaux Sud-Ouest, 33400 Talence, France. E-mail: adrian.kosowski@inria.fr}}
\title{Faster Walks in Graphs:\\ A $\widetilde O(n^2)$ Time-Space Trade-off for Undirected $s$-$t$ Connectivity}
\date{}
\begin{document}
\maketitle\thispagestyle{empty}
\setcounter{page}{0}
\vspace{8mm}

\begin{abstract}
\noindent
In this paper, we make use of the Metropolis-type walks due to Nonaka \etal\ (2010) to provide a faster solution to the $\s$-$\t$-connectivity problem in undirected graphs (USTCON).

As our main result, we propose a family of randomized algorithms for USTCON which achieves a time-space product of $S\cdot T = \O(n^2)$ in graphs with $n$ nodes and $m$ edges (where the $\O$-notation disregards poly-logarithmic terms). This improves the previously best trade-off of $\O(n m)$, due to Feige (1995). Our algorithm consists in deploying several short Metropolis-type walks, starting from landmark nodes distributed using the scheme of Broder \etal (1994) on a modified input graph. In particular, we obtain an algorithm running in time $\O(n+m)$ which is, in general, more space-efficient than both BFS and DFS. 

We close the paper by showing how to fine-tune the Metropolis-type walk so as to match the performance parameters (e.g., average hitting time) of the unbiased random walk for any graph, while preserving a worst-case bound of $\O(n^2)$ on cover time.
\end{abstract}
\vspace{8mm}
\textbf{Keywords:} undirected $\s$-$\t$ connectivity, time-space trade-off, graph exploration, Metropolis-Hastings walk, parallel random walks.
\newpage

\section{Introduction}

In the \emph{undirected $\s$-$\t$ connectivity problem} (USTCON), the input to the algorithm is an undirected graph $G=(V,E)$ with $n$ vertices and $m$ edges. Two of the vertices of the graph, $\s, \t \in V$, are distinguished. The goal is to determine whether $\s$ and $\t$ belong to the same connected component of $G$. USTCON has a spectrum of applications in various areas of computer science, ranging from tasks of network discovery to computer-aided verification. The problem has also made its mark on complexity theory, most famously, playing a central part in the rise and eventual collapse of the complexity class SL.

The time complexity of algorithms for USTCON depends on the amount of space available to the algorithm. Given $\Tt(n)$ space, USTCON can be solved deterministically in time $O(m)$ by fast algorithms such as BFS or DFS. Given $\Theta(\log n)$ space, the problem can still be solved deterministically~\cite{Rei} in polynomial time. However, in this case the fastest known solutions are randomized. Aleliunas \etal~\cite{AKLLR} proposed a log-space algorithm with bounded error probability, which consists in running a random walk, starting from node $\s$ for $O(nm)$ steps, and testing if node $\t$ has been reached. 

The study of the interplay between the space complexity $S$ and the time complexity $T$ of randomized algorithms for USTCON was initiated by \Broder~\cite{BKRU}. They observed that both BFS/DFS, and the random walk, admit the same time-space trade-off of $T = \O (\frac{mn}{S})$, and investigated whether there exist algorithms with such a trade-off for an arbitrary choice of $S$, $c\cdot \log n \leq S \leq n$, where $c>0$ is some model-dependent constant. After a sequence of papers relying on the deployment of many short random walks, this question was eventually settled in the affirmative by Feige~\cite{F}, who proposed a family of algorithms which achieve such a time-space trade-off in the whole of the considered range of space bounds. 

The main result of this paper is an improved time-space trade-off for USTCON. Since the cover time of the random walk is precisely $\Theta(nm)$ for some graphs, any improvement with respect to Aleliunas \etal~\cite{AKLLR} or Feige~\cite{F} requires a refinement of the performed walk on graphs. Instead of the random walk, we make use of the Metropolis-Hastings walk on graphs, with weighting proposed by Nonaka~\etal~\cite{NOSY}. This walk covers any undirected graph in $\O(n^2)$ steps, but its transition probabilities rely on knowledge of the degrees of neighboring nodes at every step. We start the technical sections of this paper with an explicit implementation of the walk from~\cite{NOSY} using the Metropolis sampling algorithm from~\cite{M51}. This yields a solution to USTCON in $\O(n^2)$ time and logarithmic space. Our contribution lies in completing this quadratic time-space trade-off for larger bounds on the space complexity of the algorithm. The main technical difficulty concerns overcoming problems with short runs of the Metropolis-Hastings walk, which sometimes exhibits inferior behavior to the random walk in terms of the speed of discovering new nodes. 

For the entire range of space bounds ($c\cdot \log n \leq S \leq n$), we propose algorithms running in time $T = \O (\max\{\frac{n^2}{S}, m\})$. In other words, we obtain $T = \O (\frac{n^2}{S})$ for $S \leq \frac{n^2}{m}$, and $T = \O(m)$ for $S > \frac{n^2}{m}$. (Note that $T = \Omega(m)$ is a lower bound on execution time for any algorithm for USTCON, regardless of the space bound.) In particular, we prove that USTCON can be solved in time $\O(m)$ using space $O(\frac{n^2}{m})$, which is, in general, less than the space requirement of BFS/DFS. 

All of the considered algorithms for USTCON are randomized (in the Monte Carlo sense), with bounded probability of one-sided error. This means that the positive answer ``connected'' may only be reached by the algorithm when $\s$ and $\t$ belong the same connected component of $G$, whereas the negative answer ``not connected'' signifies that,  with probability at least $2/3$, $\s$ and $\t$ belong to different components of $G$.

\subsection{Related work}

Much of the work on undirected $\s$-$\t$ connectivity has focused around its role as the fundamental complete problem for the symmetric log-space complexity class (SL). A survey of other important problems identified as belonging to the class SL, such as simulating symmetric Turing machines, and testing if a graph is bipartite, is provided in~{AGCR}. A major line of study concerned determining the minimum space required to solve USTCON deterministically. The bound on the required space was reduced, over several decades, from the $O(\log^2 n)$ bound given by Savitch's theorem~\cite{Sav}, through $O(\log^{3/2} n)$ \cite{SZ}, and $O(\log^{4/3} n)$~\cite{ATWZ}. Finally, in 2004, Reingold's~\cite{Rei} new construction of universal graph exploration sequences provided the first log-space algorithm for USTCON, showing that SL=L. Befor Reingold's paper, Nisan~\cite{Nis} had shown a deterministic algorithm for USTCON running in polynomial time and $O(\log^2 n)$ space. Borodin \etal~\cite{BCDRT} proposed a log-space Las-Vegas type algorithm for USTCON (with no-error) which runs in expected polynomial time.

When considering randomized algorithms with bounded one-sided error, the unbiased random walk was shown to solve USTCON in $O(\log n)$ space and $\O(mn)$ time by Aleliunas \etal~\cite{AKLLR}. Several years later, \Broder~\cite{BKRU} proposed a family of algorithms based on short random walks starting from landmark nodes. Relying on landmarks chosen on the set of nodes according to the stationary distribution of the walk, they achieved a time-space trade-off of $T= \O (\frac{m^2}{S})$. Subsequent algorithms from the literature~\cite{BF,F} make use of different landmark distribution schemes. Barnes and Feige~\cite{BF} achieve a trade-off of $T = \O (\frac{m^{1.5}n^{0.5}}{S})$ by using a mixed landmark distribution scheme, which places half of the landmarks according to the stationary distribution of the random walk, and half according to the uniform distribution on nodes. Feige~\cite{F} introduces the inverse distribution scheme, which likewise places half of the landmarks according to the stationary distribution of the random walk, and the other half according to the inverse of node degrees. He achieves a time-space trade-off of $T = \O (\frac{mn /\delta}{S})$ in general, where $\delta$ is the minimum degree of the graph. Thus, the trade-off of $T = \O (\frac{n^2}{S})$ is reached for the case of (nearly) regular graphs.

Undirected $\s$-$\t$ connectivity is a special case of the more general reachability problem in directed graphs (STCON), which is a complete problem for the class NL. STCON can also be solved deterministically in $O(\log^2 n)$ space using Savitch's theorem~\cite{Sav}. So far, it has resisted fast solutions in small space. This problem was extensively studied in different variants of a model of computation based on Jumping Automata on Graphs (JAG-s). The memory of a JAG is organized in the form of $P$ pebbles placed in the graph and $Q$ states of the automaton, with space defined as $S = P \log n + \log Q$. Cook and Rackoff~\cite{CR} show a way of solving STCON in the JAG model deterministically in $O(\log^2 n)$ space, and also prove an almost matching lower bound on space of $\Omega(\log^2 n /\log \log n)$. This lower bound is also known to apply to randomized JAG-s running in slightly super-polynomial time~\cite{BS}. Gopalan \etal~\cite{GLM} propose a family of algorithms for STCON based on short random walks, whose runtime increases from $O(n^{\log n})$ to $O(n^n)$ as space decreases from $O(\log^2 n)$ to $O(\log n)$.

Finally, we remark on recent developments in the area of graph exploration with biased random walks. Ikeda \etal~\cite{IKY} and Nonaka \etal~\cite{NOSY} studied possible adjustments to the transition matrix of the walk based on the availability of local topological information (otherwise known as ``look-ahead''). In general, the idea of these approaches is to increase the probability of transition to a node of lower degree. The former paper introduces a new type of walk, called the $\beta$-walk, whose transition matrices are biased so that transition from a node to its neighbor of degree $d$ is proportional to $d^{-\beta}$. Such a walk was shown to visit all nodes of the graph in $O(n^2 \log n)$ steps in expectation for an optimal choice of parameter $\beta = 1/2$. Nonaka \etal~\cite{NOSY} later used the key lemmas from this work to prove an analogous result for a walk with a modified transition matrix, which fits into the class of Metropolis-Hastings walks. This walk is the starting point for considerations in our paper. A somewhat different approach was adopted by Berenbrink \etal~\cite{BCERS}, who show that a random walk with the additional capability of marking one unvisited node in its neighborhood as visited can be used to speed up exploration.

\subsection{Overview of the paper}

The organization of the technical parts of the paper is the following. In Section~\ref{sec2}, we recall the definition of the Metropolis-Hastings walk and provide its efficient implementation using the Metropolis algorithm. In this way, given a representation of graph $G$, each step of the walk can be simulated by a procedure running in $\O(1)$ time and using $\Theta (\log n)$ bits of space.

We subsequently identify the key properties of the unit-potential Metropolis-Hastings walk, denoted $RW(G_1)$, which allow it to be used as a replacement for the (unbiased) random walk on $G$, denoted $RW(G)$, in algorithms solving USTCON. The major difference between these types of walks is that a short random walk $RW(G)$ has the desirable property of \emph{low edge-return rate}, i.e., each edge of the graph is visited $O(\sqrt t)$ times in expectation during $t$ steps of the walk (for sufficiently small $t$). However, no analogous property hold for the Metropolis-Hastings walk. In fact, on some graphs (e.g.,\ the \emph{glitter star} defined in~\cite{NOSY}), the Metropolis-Hastings walk $RW(G_1)$, will in expectation discover only $O(1)$ edges during $t$ steps of the walk, visiting each of these edges $\Omega(t)$ times (for any choice of $t\leq n$). We overcome this problem in two stages:
\begin{itemize}

\item In Section~\ref{sec22} we prove that in a graph of maximum degree $\Delta$, the Metropolis-Hastings walk $RW(G_1)$ begins to achieve a \emph{low node-return rate} starting from a threshold length of $\Delta^2$ steps: a Metropolis-Hastings walk of length $t$, $\Delta^2 < t < n^2$, visits each node of the graph $O(\sqrt t)$ times in expectation. This property is formally stated as Lemma~\ref{lemiii}. 

\item In Section~\ref{sec3} we show how to obtain the trade-off $T = \O (\max\{\frac{n^2}{S}, m\})$  for an arbitrary choice of space bound $S$. Our initial approach makes use of a modification of a technique introduced by \Broder~\cite{BKRU}. It consists in running $p \approx S$ walks of length $t \approx \frac{n^2}{S^2}$ each, which originate from an appropriately chosen subset of $p$ nodes of the graph called \emph{landmarks}. In our formulation, the walks used are Metropolis-Hastings walks (rather than random walks on $G$), and the set of landmarks is sampled uniformly on $V$. By observing the visits of each of these walks to other landmarks from the set, it is possible to obtain information about paths connecting different landmarks. When the performed Metropolis-Hastings walks have a low node-return rate (i.e., when $t>\Delta^2$), the obtained information turns out to be w.h.p.\ sufficient to find an answer to $\s$-$\t$ connectivity with a low probability of error. Otherwise, when $t < \Delta^2$, we modify the approach, performing a logical transformation of graph $G$. We split each node of degree greater than $\sqrt t$, so that the maximum degree of the modified graph does not exceed $\sqrt t$. Then, all of the considerations are performed for this modified graph. In particular, the set of landmark nodes is chosen by uniform sampling on the set of nodes of this modified graph. The overhead associated with this transformation is just small enough for our algorithm to have the claimed time complexity of $T = \O (\max\{\frac{n^2}{S}, m\})$. An implementation of the complete algorithm is provided in Appendix~A.
\end{itemize}

Finally, in the closing Section~\ref{sec5} we discuss the tightness of the obtained results. We also propose a modified weighting of the Metropolis-Hastings walk which performs faster than uniform-weighted Metropolis-Hastings for many classes of graphs, while still covering all the nodes of the graph in $\O(n^2)$ time. This walk satisfies the property that its commute time between any pair of nodes (and consequently also the average hitting time) is asymptotically upper-bounded by the values of the respective parameters for the unbiased random walk. In particular, it covers all the nodes of the previously mentioned glitter star, in expected $\O(n)$ steps.

\subsection{Notation and model}

The input graph $G=(V,E)$, with $|V|=n$ and $|E|=m$, is simple and not necessarily connected. In order to simplify notation for complexity bounds, we assume $m = \Omega(n)$. The degree of a node $v\in V$ is denoted by $\deg(v)$, the neighborhood of node $v$ by $\Gamma(v)$, and  the closed neighborhood of $v$ by $\Gamma^+(v) = \Gamma(v) \cup \{v\}$. The maximum degree of the graph is denoted by $\Delta$. The arc set $\vec E \subseteq V \times V$ of undirected graph $G$ is understood as the set of arcs of all edges and self-loops of $G$: $\vec E = \{(v,u) : v\in V, u \in \Gamma^+(v)\}$. An arc $(v,u)\in \vec E$ is sometimes denoted as $e_{vu}$ for compactness of notation. Note that the symbols $V$, $E$, $\Delta$, $n$, $m$ always refer to the input graph $G$. When considering a different graph $X$, we will sometimes denote its vertex, edge, and arc sets by $V(X)$, $E(X)$, and $\vec E(X)$, respectively. 

Our algorithms are designed for the classical RAM model of computation. No special assumptions are made on the representation of graph $G$, except that for any node $v\in V$, there should exist a local ordering on the set of its neighbors, given by the bijective function $PORT_v : \Gamma(v) \to \{0,1,\ldots,\deg(v)-1\}$. Each of the following operations should be possible to implement in $\O(1)$ time: computing $\deg(v)$, computing $PORT_v(u)$ for a node $u \in \Gamma(v)$, and ``traversing an edge'' by computing $PORT^{-1}_v(i)$, for port $i\in \{0,1,\ldots,\deg(v)-1\}$. An example of a permissible representation is a lexicographically sorted array of ordered pairs of identifiers of neighboring nodes $(u,v)$, taken over $\{u,v\}\in E$.

For most of the paper, we consider weighted reversible Markovian processes corresponding to a random walk $RW(X)$ on some weighted undirected graph $X$ with positive weights on arcs. The walk is located on the nodes of graph $X$, and the next state of the walk is reached by following an arc incident to the current node, chosen with probability proportional to the weight of this arc. By a slight abuse of notation, we denote the transition matrix of the walk in the same way as the weighted graph. Most other notation follows that of Aldous and Fill~\cite{AF}. In particular, we consider the following random variables:
\begin{itemize}
\item $N_a (t)$ denotes the number of steps in the time interval $[0,t)$ during which the walk visits $a$, where the symbol $a$ may represent a node, edge, or arc of the graph.
\item $T_a$ denotes the first moment of time $t>0$ at which the walk first visits (or returns to) a node from $a$, where the symbol $a$ may represent a subset of nodes or a single node of the graph.
\end{itemize}
By writing $\E_\alpha Y$ and $\Pr_\alpha [E]$, respectively, we mean the expectation of random variable $Y$, and the probability of event $E$ occurring, taken over walks starting from probability distribution $\alpha$ (which may be concentrated on a single node or arc). A walk starting from an arc is understood as one which starts from the head of the arc at time $0$, and then moves to the tail of the arc at time $1$.

Given a weighted graph $X$, we denote by $Com(i,j) \equiv \E_i T_j + \E_j T_i$ the commute time between nodes $i,j \in V(X)$. Throughout the paper, we consider only walks representing reversible Markovian processes, corresponding to symmetric weightings of the graph: $w(e_{vu}) = w(e_{uv})$, for all $(u,v)\in \vec E$. In some of the proofs, we rely on the resistor network representation of reversible walks: for each edge $e=\{u,v\} \in E(X)$ having weight $w(e)$ on each of its arc, a resistor with resistance $1/w(e)$ is placed between nodes $u$ and $v$ of the resistor network. The symbol $R(u,v)$ denotes the resistance of replacement between nodes $u$ and $v$ of the network. We recall that $Com(i,j) = R(i,j)\sum_{e\in \vec E(X)}w(e)$.~\cite{CRRST}

\section{Preliminaries: The Metropolis-Hastings Walk on Graphs}\label{sec2}\label{sec22}

The \emph{Metropolis-Hastings walk} with potential function $f : V \to \mathbb{R}^+$ is defined as a walk on the weighted graph $G_f = (V, E, w_f)$ with the following assignment of weights $w_f : \vec E \to \mathbb{R}^+$:
$$
	w_f(e_{vu}) = \min\left\{\frac{f(v)}{\deg(v)}, \frac{f(u)}{\deg(u)}\right\}, \textrm{\quad for all } \{v,u\}\in E.
$$
$$
	w_f(e_{vv}) = f(v) - \sum_{u\in \Gamma(v)} w_f(\{v,u\}), \textrm{\quad for all } v \in V,
$$
We recall that for a walk in state $v \in V$, the next state is chosen as $u \in \Gamma(v) \cup \{v\}$ with probability proportional to the weight $w_f(e_{vu})$. By a classical result due to Metropolis \etal~\cite{M51}, for a given representation of graph $G$, a single step of the Metropolis-Hastings walk $RW(G_f)$ can be simulated in $\O(1)$ time and space by means of the procedure shown in Algorithm~\ref{algo1}. The algorithm takes advantage of the fact that $w_f(e_{vu}) / \sum_{x\in \Gamma^+(v)} w_f(e_{vx})\leq \frac{1}{\deg (v)}$, for all $u \in \Gamma(v)$. For a walk located at node $v$, it samples a node $u\in \Gamma(v)$ with uniform probability $\frac{1}{\deg (v)}$, and accepts $u$ as the new state with the appropriate probability. We remark that a step of $RW(G_f)$ can  also be simulated by a log-space automaton which pushes a pebble along the arc $(v,u)$. The pebble remains at $u$ if state $u$ is accepted, and otherwise reverts to $v$ by traversing the arc $(u,v)$. Thus, one step of $RW(G_f)$ can be simulated by at most two moves of a pebble.

\IncMargin{0.5em}
\begin{algorithm}
\textbf{function} next\_state ($v$: node) \{\\
\* $u \gets$ neighbor of $v$ in $G$ chosen uniformly at random; \quad// \emph{pick a new state}\\
\* \textbf{with probability} $\min \{\frac{\deg(v)}{\deg(u)}\frac{f(u)}{f(v)}, 1\}$ \textbf{do} \textbf{return} $u$; \ // \emph{accept: move to new state}\\
\* \textbf{return} $v$; \quad// \emph{do not accept: keep current state}\\
\}
\caption{State transition function on $V$ for the walk $RW(G_f)$.
}
\label{algo1}
\end{algorithm}

\begin{definition}
We denote by $G_1$ the weighted graph $G_f$ for the unit potential function $f(v) \equiv 1$.
\end{definition}

From the next two sections, we focus on the Metropolis-Hastings walk $RW(G_1)$. We note that the weights on the edges of $G$ are now simply given by $w(e_{vu}) = \min\{\frac{1}{\deg(v)}, \frac{1}{\deg(u)}\}$.

The bound on the time required by the Metropolis-Hastings walk to discover w.h.p.\ the entire connected component containing the starting node of the walk follows from the considerations of Nonaka \etal~\cite{NOSY}. (All omitted proofs are provided in the Appendix.)

\begin{lemma}[\cite{NOSY}]\label{lemtwo}
Let $i \in V$, let $H$ be the connected component of $G$ containing node $i$, and let $n_H = |V(H)|$. Then:
\begin{itemize}
\item a walk $RW(G_1)$ of length $12 n_H^2$ starting from $i$ covers an arbitrary node $j\in V(H)$ with probability at least $\frac{1}{2}$.
\item a walk $RW(G_1)$ of length $24 n_H^2 \log n$ starting from $i$ covers all nodes from $V(H)$ with probability at least $1 - \frac{1}{n}$.
\end{itemize}
\end{lemma}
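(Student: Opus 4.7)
The plan is to reduce the lemma to a uniform bound $\E_x T_j \leq 6 n_H^2$ on the expected hitting time between arbitrary nodes $x, j$ of the same connected component $H$, and then to convert this expected bound into the two claimed high-probability statements. The conversion is standard: Markov's inequality gives $\Pr_i[T_j \leq 12 n_H^2] \geq 1/2$, which is the first bullet. For the second bullet, I would slice the interval $[0, 24 n_H^2 \log n)$ into $2 \log n$ disjoint blocks of length $12 n_H^2$ and apply the first bullet at the start of each block via the strong Markov property: conditional on not having visited $j$ by the start of a block, the walk covers $j$ within that block with probability at least $1/2$, since the expected hitting time bound holds uniformly in the starting node. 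Hence the probability of missing a fixed $j$ in the entire window is at most $2^{-2\log n} = n^{-2}$, and a union bound over the at most $n_H \leq n$ target nodes in $V(H)$ delivers the $1 - 1/n$ guarantee.

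It therefore remains to establish $\E_i T_j = O(n_H^2)$. I would obtain this from the commute time identity $Com(i,j) = R(i,j) \cdot \sum_{e \in \vec E(H)} w(e)$ together with $\E_i T_j \leq Com(i,j)$. The weight sum is immediate: by the defining normalization of $G_1$, $\sum_{u \in \Gamma^+(v)} w(e_{vu}) = f(v) = 1$ for every $v \in V(H)$, whence $\sum_{e \in \vec E(H)} w(e) = n_H$. The task thus reduces to proving the effective-resistance bound $R(i,j) = O(n_H)$ in the associated resistor network, in which edge $\{u,v\}$ carries resistance $1/w(e_{uv}) = \max\{\deg(u), \deg(v)\}$.

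The main obstacle is precisely this effective-resistance bound; it is the only step that really uses the structure of $G_1$. A naive argument that routes a unit flow along a shortest $i$-$j$ path gives only $R(i,j) \leq (n_H - 1)\,\Delta$, which is too weak by a factor of $n_H$ in the worst case. I would therefore invoke the flow construction of Nonaka \etal~\cite{NOSY}, which distributes the flow across a bundle of short paths so that each edge of resistance $d$ carries flow of order $1/d$, and hence contributes energy of order $1/d$; this exploits the fact that a vertex of large degree automatically offers many parallel routes, compensating the large per-edge resistance. Plugging $R(i,j) = O(n_H)$ and $\sum_e w(e) = n_H$ into the commute time identity yields $Com(i,j) = O(n_H^2)$, and adjusting the implicit constant to at most $12$ completes the argument.
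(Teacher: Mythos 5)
The paper does not prove this lemma itself; it imports it wholesale from Nonaka \etal~\cite{NOSY}, so there is no in-paper proof to compare against directly. That said, the same core technique does appear nearby, in the appendix proof of Lemma~\ref{lemiii}, and your argument diverges from it in an important way.

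Your high-level skeleton is fine: reduce to a uniform bound $\E_x T_j \le 6n_H^2$ via the commute-time identity $Com(i,j) = R(i,j)\sum_{e\in\vec E(H)} w(e)$ with $\sum_e w(e) = n_H$, then apply Markov for the first bullet and a block/union-bound argument for the second. The gap is where you dismiss the single-shortest-path resistance bound as yielding only $(n_H-1)\Delta$ and then appeal to a ``flow distributed across a bundle of short paths.'' This misdiagnoses the situation. For a shortest $i$-$j$ path $P=(i_0,\dots,i_k)$ in $H$, no vertex of $H$ can be adjacent to three path vertices whose indices span more than two (otherwise $P$ would have a shortcut), so $\sum_{l=0}^k \deg(i_l) \le 3 n_H$ (cf.~\cite{AF}; this is also the Ikeda \etal~\cite{IKY} ``key lemma'' that the introduction credits Nonaka \etal\ with reusing). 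Hence the series resistance of $P$ alone already gives
\[
R(i,j) \;\le\; \sum_{l=0}^{k-1}\max\{\deg(i_l),\deg(i_{l+1})\} \;\le\; 2\sum_{l=0}^k\deg(i_l) \;\le\; 6n_H,
\]
and therefore $\E_i T_j \le Com(i,j) \le 6 n_H^2$, exactly as you need, with no flow-spreading required. This is the same shortest-path degree-sum calculation the paper performs in the proof of Lemma~\ref{lemiii} (there a $+2\Delta$ term appears only because the terminal node of that path is required to lie outside the set $A$, which is not an issue here). So the missing ingredient is not a more elaborate flow construction but simply the observation that the maximum-of-endpoint-degree resistances along a shortest path telescope to $O(n_H)$ rather than $O(n_H\Delta)$. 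Once that is in place, your Markov/block/union-bound conversion goes through as written (with the usual caveat that the base-$2$ logarithm is being used when turning $2\log n$ halving rounds into an $n^{-2}$ failure probability).
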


By the above Lemma, a solution to USTCON, with probability $1-\frac{1}{n}$, is obtained by running the walk $RW(G_1)$, starting from $\s$, for $24 n^2 \log n$ steps. USTCON can therefore be solved in log-space by running Algorithm~\ref{algo1} in a loop for $O(n^2 \log n)$ iterations. (We are unaware of any previous reference in the literature for this observation.)

\begin{corollary}
There is a log-space algorithm for USTCON which runs in time $O(n^2\log n)$, with probability of one-sided error bounded by $\frac{1}{n}$.
\qed
\end{corollary}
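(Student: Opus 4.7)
The plan is to show that the desired algorithm is obtained essentially by running Algorithm~\ref{algo1} in a loop, using $s$ as the start vertex and comparing the current location against $t$ at every step. Formally, I would initialize a variable $v \gets \s$ and a counter $k \gets 0$; then, while $k < 24 n^2 \log n$, set $v \gets $ \textsc{next\_state}$(v)$, increment $k$, and return ``connected'' the first time the equality $v = \t$ is detected; if the loop completes without this event, return ``not connected''.

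For correctness, I would first observe that one-sided error is automatic: if $\s$ and $\t$ lie in different connected components of $G$, then the simulated walk $RW(G_1)$ starting at $\s$ remains inside the component of $\s$ at every step, so the current vertex $v$ is never equal to $\t$ and the algorithm correctly outputs ``not connected''. In the case where $\s$ and $\t$ lie in the same component $H$ of $G$, let $n_H = |V(H)| \leq n$; then $24 n_H^2 \log n \leq 24 n^2 \log n$, and the second part of Lemma~\ref{lemtwo} applied with $i = \s$ implies that the walk covers all of $V(H)$ (in particular, visits $\t$) with probability at least $1 - 1/n$. Thus the probability of one-sided error is bounded by $1/n$, as claimed.

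For the resource bounds, I would appeal to the earlier discussion of Algorithm~\ref{algo1}: under the RAM representation of $G$ assumed in the notation subsection, sampling a uniform neighbor via $PORT_v^{-1}$, reading $\deg(v)$ and $\deg(u)$, and performing the Metropolis accept/reject test each cost $O(1)$ time and use only $O(\log n)$ bits of working memory. The outer driver needs only to store the current vertex $v$, the step counter $k$ (whose range $\{0,\ldots,24n^2\log n\}$ fits in $O(\log n)$ bits), and $\t$ itself; the equality test $v = \t$ likewise costs $O(1)$ time. Therefore the overall space is $O(\log n)$ and the total time is $24 n^2 \log n$ iterations of an $O(1)$-time body, i.e., $O(n^2 \log n)$.

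There is no real technical obstacle here; the statement is a direct corollary of Lemma~\ref{lemtwo} combined with the constant-time, logarithmic-space simulation of one step of $RW(G_1)$ via Algorithm~\ref{algo1}. The only point worth a brief sentence in the write-up is verifying that the step counter fits in $O(\log n)$ bits, so that the claimed space bound genuinely includes loop bookkeeping and not just the state of the walk itself.
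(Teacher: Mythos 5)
Your proof is correct and follows exactly the route the paper takes: run Algorithm~\ref{algo1} in a loop for $24 n^2 \log n$ iterations from $\s$, checking for $\t$, with correctness from the second bullet of Lemma~\ref{lemtwo} and the resource bounds from the $O(1)$-time, $O(\log n)$-space simulation of a single step of $RW(G_1)$. Your extra observation that the step counter itself fits in $O(\log n)$ bits is a small but legitimate detail the paper leaves implicit.
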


For our purposes, we will need a more detailed analysis of the behavior of the Metropolis-Hastings walk. We start by recalling that the Metropolis-Hastings walk $RW(G_1)$ is a reversible Markovian process, since $w(e_{vu}) = w(e_{uv})$ for all arcs. Its stationary distribution is the uniform distribution $\pi : V \to \mathbb{R}^+$, with
$
\pi(v)= \frac{1}{n},
$
for all $v\in V$. This allows us to show the following key lemma which captures the ``low node-return rate'' property of the Metropolis-Hastings walk, as highlighted in the Introduction. The first claim states that a Metropolis-Hastings walk starting within any subset of nodes $A\subsetneq V$ is likely to leave it within $O(|A|^2)$ steps, while its second claim shows that a Metropolis-Hastings walk of length $t$ is likely to return to its starting node not more than $O(\sqrt t)$ times. However, both of the above statements hold only when considering walks of duration $\Omega(\Delta^2)$.
\begin{lemma}\label{lemiii}
Suppose that $G$ is connected. Let $A \subsetneq V$, and let $i\in A$. For a weighted random walk $RW(G_1)$ starting from node $i$:
\begin{enumerate}
\item[(i)] the expected time to reach a node from $V\setminus A$ is bounded by:
$$
\E_i T_{V\setminus A} < (|A|+1)(6|A|+2\Delta),
$$
\item[(ii)] the expected number of visits to node $i$ before any time $t$, $0<t<6n^2$, is bounded by:
$$
\E_i N_i (t)  < 5 \sqrt t + 2 \Delta.
$$
\end{enumerate}
\end{lemma}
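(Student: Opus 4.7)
The plan is to work in the electrical-network formulation of the reversible walk $RW(G_1)$, using the commute-time identity $Com(i,j) = R(i,j) \cdot \sum_{e \in \vec E(X)} w(e)$ recalled in the preliminaries. Throughout, $\pi \equiv 1/n$ and $c(v) = f(v) = 1$ for every vertex $v$.

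For part~(i), I would contract $V \setminus A$ into a single vertex $s^*$ in an auxiliary weighted graph $H$ on $A \cup \{s^*\}$, whose transitions on arcs incident to $A$ coincide with those of $RW(G_1)$. Since the walk on $H$ agrees with $RW(G_1)$ up to the first exit from $A$,
\[
\E_i T_{V \setminus A} \;\leq\; Com_H(i, s^*) \;=\; R_H(i, s^*) \cdot W_H,
\]
with $W_H = \sum_v f_H(v) \leq |A|+1$ after normalizing $f_H(s^*) = 1$. The heart of the proof is then to show $R_H(i, s^*) \leq 6|A|+2\Delta$. A naive shortest-path argument cannot suffice, since each edge of $G_1$ can have resistance up to $\max\{\deg_G(u),\deg_G(v)\} = \Theta(\Delta)$, yielding only the too-loose $|A|\Delta$. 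Instead I would exhibit an explicit unit flow from $i$ to $s^*$ that short-circuits through high-degree vertices: any $v \in A$ with $\deg_G(v) > |A|$ has at least $\deg_G(v) - |A| + 1$ neighbors in $V \setminus A$, so its cluster of boundary edges offers parallel conductance of order $(\deg_G(v)-|A|)/\Delta$, giving an ``exit'' resistance $O(\Delta/|A|)$ whenever $\deg_G(v)$ is well above $|A|$. The residual flow is routed through vertices of moderate degree, where edge resistances are $O(|A|)$.

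For part~(ii), I would combine (i) with the Green's-function identity for reversible chains,
\[
\E_i N_i(T_{V \setminus A}) \;=\; c(i)\cdot R(i, V \setminus A) \;=\; R_H(i, s^*),
\]
valid because $c(i) = 1$. Choosing $A$ to be a breadth-first ball around $i$ in $G$ of size $|A| \approx \sqrt t$, the resistance bound from (i) yields $\E_i N_i(T_{V \setminus A}) = O(\sqrt t + \Delta)$, producing the $5\sqrt t$ main term. The additive $2\Delta$ correction accounts for the initial run of self-loops at $i$, of expected length up to $\Theta(\Delta)$ when all neighbors of $i$ have much larger degree. For visits occurring after the walk first leaves $A$, I would amortize against the uniform stationary mass $\pi(i)=1/n$: since $t < 6n^2$, the post-exit contribution is at most $t/n \leq \sqrt{6\,t}$, which is absorbed into the main term.

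The principal obstacle is the flow construction in (i): sharpening the resistance from the naive $|A|\Delta$ bound down to $O(|A|+\Delta)$ hinges on routing through parallel boundary edges at high-degree vertices of $A$, and a careful case analysis on the degree profile along the flow route is needed to obtain the concrete constants $6|A|+2\Delta$. Part~(ii) then follows relatively cleanly via the Green's-function identity, modulo the choice of $A$ and the post-exit bookkeeping.
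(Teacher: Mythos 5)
Your overall strategy — electrical-network formulation, commute-time bound, then a Green's-function identity for part~(ii) — is the right framework, but both halves of the proposal have substantive gaps, and in one place you reject the argument the paper actually uses.

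For part~(i), you dismiss the shortest-path resistance bound on the grounds that it yields only $O(|A|\Delta)$. This misses the key structural fact the paper exploits: if $P_s=(i_0,\dots,i_{a-1})$ is the interior portion of a shortest path to $V\setminus A$, then every node in $\Gamma(P_s)$ lies inside $A$ (else there would be a shorter escape), and on a shortest path any vertex of $G$ is adjacent to at most three consecutive $i_l$. These two facts combine to give $\sum_{l=0}^{a-1}\deg(i_l)\leq 3|A|$ — not $|A|\Delta$. Plugging this into the series resistance $\sum_l \max\{\deg(i_l),\deg(i_{l+1})\}\leq 2\sum_l\deg(i_l)+2\Delta$ immediately produces $R^\circ(i,j)<6|A|+2\Delta$, with the additive $2\Delta$ coming only from the final one or two edges of the path where the walk actually crosses $\partial A$. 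Your proposed replacement — routing a unit flow through parallel boundary edges at high-degree vertices — is not carried out, and it is unclear how to make it yield the $6|A|+2\Delta$ constant; more fundamentally it is unnecessary, because the route you abandoned works once the degree-sum trick is in hand.

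For part~(ii), your choice of $A$ as a BFS ball of size $\approx\sqrt t$ does not support the post-exit bookkeeping you sketch. After the walk leaves the ball it sits on $\partial A$, not at stationarity, so ``amortize against $\pi(i)=1/n$'' is not a valid bound on the remaining visits; the walk could fall back into the ball repeatedly. The paper instead defines $A=\{j:\E_i N_j(t)>s\}$ with $s=\sqrt{6t}$. This has $|A|<t/s$ by a counting argument, and — crucially — the reversibility identity $\E_i N_j(t)=\E_j N_i(t)$ (the paper's Lemma~\ref{lemrev}) lets one rewrite $V\setminus A=\{j:\E_j N_i(t)\leq s\}$. So \emph{from wherever the walk exits}, the expected remaining visits to $i$ are at most $s$ by construction of $A$. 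That is the step your sketch is missing, and it is what makes the total $\E_i N_i(t)\leq R(i,V\setminus A)+s<6t/s+2\Delta+s=2\sqrt{6t}+2\Delta$ go through. The $t<6n^2$ hypothesis enters here to guarantee $|A|<n$, i.e.\ $A\subsetneq V$, not via a post-exit $t/n$ estimate.
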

The proof of the lemma follows by an analysis of resistances of replacement along shortest paths in the resistor network for the weighted graph $G_1$.

\section{A time-space trade-off for USTCON}\label{sec3}

The time-space tradoffs for USTCON proposed by \Broder~\cite{BKRU} make use of a number of short random walks, originating from a subset of nodes of the graph called \emph{landmarks}. Herein, we design an algorithm which replaces these random walks by Metropolis-Hastings walks.

We start by a brief overview of the landmark-based approach. When considering an algorithm using space $S$, the size of the set of landmarks is defined by a parameter $p = \Theta(S/\log n)$. The algorithm first chooses a set of landmarks $L \subseteq V$ consisting of $p+2$ nodes: node $\s$, node $\t$, and $p$ nodes picked (in the case of our work) uniformly at random from $V$. Then, a walk of suitably chosen length $t$ is released from each of the landmarks. Throughout this process, the algorithm maintains a disjoint-set data structure (also known as ``Union-Find''~\cite{HU}) on the set of landmarks, with each set corresponding to the landmarks identified as belonging to the same connected component of the graph.

Initially, each landmark belongs to a separate set. Whenever a walk starting from one landmark hits some other landmark, the algorithm updates the data structure, merging the classes corresponding to these two landmarks. At the end of the process, if landmarks $\s$ and $\t$ belong to the same class, then, with certainty, there exists an $\s$-$\t$ path in $G$, and the answer to USTCON is positive. Otherwise, the algorithm returns a negative result, and, in the rest of this Section, we focus on proving that this result is correct w.h.p.

The runtime of the algorithm of \Broder~is determined by the time of running $p = \Tt(S)$ random walks of length $t$ each, thus $T = \O(tp) = \O(\frac{t p^2}{S})$. To achieve the claimed trade-off of $T = \O(\frac{n^2}{S})$, we will therefore need to use walks of length roughly $t \approx \frac{n^2}{p^2}$.

\subsection{An initial approach}\label{sec31}

We fix a value of the parameter $p = O(S)$, describing the number of landmark nodes. The landmark-based algorithms are built around the premise that landmarks belonging to the same connected component of $G$ quickly discover each other with the help of the short walks they release. In particular, it is desirable that the set of landmarks in each connected component of $G$ has the property that for any partition of the set of landmarks into two subsets, a short walk originating from a landmark in one of these subsets is likely to reach some landmark from the other subset. \Broder~\cite{BKRU} observe (cf.\ also~\cite{F} for a high-level exposition of the argument) that this property is satisfied if the considered set of landmarks is \emph{good}, i.e., it fulfills the following two assumptions. Firstly, the set of short walks originating from all of the landmarks should be likely to jointly cover all the arcs of the graph. Secondly, a short walk originating from an arbitrary starting node of the graph should be likely to reach at least one landmark from the set.

Most of the analysis and key lemmas in this subsection follow along the lines proposed by \Broder We confine ourselves to a summary of the approach, highlighting the subtle differences resulting from the use of Metropolis-Hastings walks. We start by re-setting the good landmark property in the context of Metropolis-Hastings walks $RW(G_1)$ of a specifically chosen length $\tau$.

\begin{property}\label{begood}
Let $L \subseteq V$ be the set of $p = |L|$ landmark nodes, let $H$ be a connected component of $G$, and let $n_p = \max\{\gamma \frac{n}{p} \log n, \Delta\}$, where $\gamma =60$ is a suitably chosen absolute constant (whose value follows from the proof of Lemma~\ref{lemGood}). We say that the set of landmarks $L$ is \emph{good with respect to $H$} if the following properties hold:
\begin{itemize}
\item With probability at least $1 - \frac{1}{n}$, a set of $p$ walks $RW(G_1)$ of length $\tau = n_p^2$ each, with one walk originating from each landmark from $L$, covers an arbitrarily chosen arc of $H$.
\item With probability at least $1 - \frac{1}{n}$, a walk $RW(G_1)$ of length $\tau = n_p^2$, originating from an arbitrarily chosen node of $H$, hits some landmark from $L$.
\end{itemize}
\end{property}

In the above property, the choice of the length $\tau$ of the walk takes into account that walks of length $\O(\frac{n^2}{p^2})$ lead us to the sought time complexity of $\O(\frac{n^2}{p})$ for the algorithm. However, in order to ensure that a uniformly sampled landmark set is likely to be good, we will make use of the low node-return rate of the Metropolis-Hastings walk from Lemma~\ref{lemiii}, and thus we need to have $\tau = \Omega(\Delta^2)$.

We will now show that that Property~\ref{begood} holds w.h.p.\ for a set of landmarks, each of which is chosen according to the uniform distribution $\pi$ on the set of nodes $V$. To achieve this, we capture the ``contribution'' of a single Metropolis-Hastings walk to the probability of success of the events described in the Property. It turns out that a Metropolis-Hastings walk  of the chosen length $\tau$, when starting from a landmark, has probability $\Omega(1/p)$ of reaching an arbitrary arc of the graph. When starting from an arbitrary node from $V$, such a walk has probability $\Omega(1/p)$ of reaching any specific landmark. These claims are formulated in a slightly more general way as the two lemmas below. Their proofs take into account the low node-return rate property from Lemma~\ref{lemiii}$(ii)$, and the properties of a walk starting from its stationary distribution $\pi$.

\begin{lemma}\label{lemA}
Suppose that $G$ is connected. For a weighted walk $RW(G_1)$ starting from a node chosen according to the uniform distribution $\pi$, the probability of traversing (a fixed) non-loop arc $e_{ij}$ before time $t$, where $\Delta^2 \leq t < 6n^2$, is:
$$
\Pr_{\pi} [T_{e_{ij}} < t ] > 0.1 \sqrt t / n.
$$
\end{lemma}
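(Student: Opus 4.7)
The plan is to lower-bound $\Pr_\pi[T_{e_{ij}} < t]$ by the standard hitting-probability inequality
\[
\Pr_\pi[T_{e_{ij}} < t] \;\geq\; \frac{\E_\pi N_{e_{ij}}(t)}{\E_\pi[N_{e_{ij}}(t) \mid T_{e_{ij}} < t]},
\]
and to control both sides using stationarity together with Lemma~\ref{lemiii}(ii). Because $RW(G_1)$ has uniform stationary distribution $\pi(v)=1/n$ and, since $\sum_{u\in\Gamma^+(v)} w(e_{vu})=1$, transition probability $P(i,j)=w(e_{ij})$, the numerator evaluates exactly to $\E_\pi N_{e_{ij}}(t) = t\,w(e_{ij})/n$.

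For the denominator I would apply the strong Markov property at $T_{e_{ij}}$: conditional on $T_{e_{ij}}<t$ the walk sits at $j$ one step later, so
\[
\E_\pi[N_{e_{ij}}(t) \mid T_{e_{ij}} < t] \;\leq\; 1 + \E_j N_{e_{ij}}(t) \;=\; 1 + w(e_{ij})\,\E_j N_i(t).
\]
The small elementary step I would rely on here is the inequality $\E_j N_i(t) \leq \E_i N_i(t)$, proved by conditioning on the first hitting time $T_i$ of $i$ and applying strong Markov at $T_i$: this gives $\E_j N_i(t) \leq \Pr_j[T_i<t]\cdot\E_i N_i(t) \leq \E_i N_i(t)$. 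Combined with the bound $\E_i N_i(t) < 5\sqrt t + 2\Delta \leq 7\sqrt t$ from Lemma~\ref{lemiii}(ii), valid in the range $\Delta^2 \leq t < 6n^2$, this yields $\E_\pi[N_{e_{ij}}(t) \mid T_{e_{ij}} < t] \leq 1 + 7\,w(e_{ij})\sqrt t$.

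Putting the two estimates together one obtains $\Pr_\pi[T_{e_{ij}} < t] \geq \frac{\sqrt t}{n}\cdot\frac{x}{1+7x}$ with $x := w(e_{ij})\sqrt t$. The hypothesis $t \geq \Delta^2$ together with $w(e_{ij}) = \min\{1/\deg(i),1/\deg(j)\} \geq 1/\Delta$ forces $x \geq 1$, so $x/(1+7x) \geq 1/8 > 0.1$ and the claimed bound follows. I do not foresee a serious obstacle; the one subtlety worth flagging is that I deliberately avoid the Cauchy--Schwarz-style estimate $P^s(i,j)^2 \leq P^s(i,i)P^s(j,j)$, which may fail for $RW(G_1)$ since the walk need not be lazy---maximum-degree vertices carry no self-loop---and can have negative eigenvalues. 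The strong-Markov comparison $\E_j N_i(t) \leq \E_i N_i(t)$ sidesteps this issue cleanly and is the crucial point where Lemma~\ref{lemiii}(ii) does all the work.
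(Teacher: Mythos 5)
Your proof is correct and follows essentially the same route as the paper's: both lower-bound $\Pr_\pi[T_{e_{ij}}<t]$ by the ratio $\E_\pi N_{e_{ij}}(t)/\E_{e_{ij}}N_{e_{ij}}(t)$, evaluate the numerator via stationarity as $t\,w(e_{ij})/n$, bound the denominator by one plus the expected traversals from $j$, reduce this to $\E_i N_i(t)$ via the ``must return to $i$ first'' observation, and invoke Lemma~\ref{lemiii}(ii) together with $t\geq\Delta^2$. The only cosmetic difference is bookkeeping: you isolate $x=w(e_{ij})\sqrt t\geq 1$ and use $x/(1+7x)\geq 1/8$, while the paper absorbs the additive $1$ as $\Delta\,w(e_{ij})$ and bounds $t/(n(5\sqrt t+3\Delta))\geq\sqrt t/(8n)$---both yield the same constant $1/8>0.1$.
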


\begin{lemma}\label{lemB}
Suppose that $G$ is connected. Let $j\in V$ be picked according to the uniform distribution $\pi$. For a weighted walk $RW(G_1)$ starting from some node $i \in V$, the probability of reaching $j$ before time $t$, where $\Delta^2 \leq t < 6n^2$, is:
$$
\Pr_{i} [T_j < t ] > 0.1 \sqrt t / n.
$$
\end{lemma}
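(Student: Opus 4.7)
The plan is to deduce the lemma directly from the ``low node-return rate'' Lemma~\ref{lemiii}$(ii)$ combined with the strong Markov property, followed by an averaging trick over the uniform choice of the target node $j$. The statement should be read as asserting a lower bound on $\E_{j\sim\pi}\Pr_i[T_j < t]$, the joint probability that the walk started at $i$ reaches the (independently, uniformly sampled) target $j$ before time $t$.

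First, I would bound $\Pr_i[T_j < t]$ for a fixed pair $(i,j)$ from below by $\E_i[N_j(t)]$ divided by the maximum expected number of visits the walk can make to $j$ after its first hit. Precisely, by the strong Markov property applied at the stopping time $T_j$, conditional on $T_j = s < t$, the residual number of visits to $j$ in the interval $[s,t)$ is $\E_j[N_j(t-s)] \leq \E_j[N_j(t)]$, where the inequality uses monotonicity of $N_j$ in time. Since $t < 6n^2$, Lemma~\ref{lemiii}$(ii)$ applies and yields $\E_j[N_j(t)] < 5\sqrt{t}+2\Delta$. Unconditioning,
\[
\E_i[N_j(t)] \;=\; \Pr_i[T_j < t]\cdot \E_i[N_j(t)\mid T_j < t] \;\leq\; \Pr_i[T_j < t]\,(5\sqrt{t}+2\Delta).
\]

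Second, I would average over $j$ chosen uniformly on $V$ and use the telescoping identity $\sum_{j\in V} N_j(t) = t$ (each time step contributes a visit to exactly one node):
\[
\E_{j\sim\pi}\Pr_i[T_j<t] \;=\; \frac{1}{n}\sum_{j\in V}\Pr_i[T_j<t] \;\geq\; \frac{1}{n(5\sqrt{t}+2\Delta)}\sum_{j\in V}\E_i[N_j(t)] \;=\; \frac{t}{n(5\sqrt{t}+2\Delta)}.
\]
The hypothesis $t \geq \Delta^2$ then gives $2\Delta \leq 2\sqrt{t}$, so the denominator is at most $7n\sqrt{t}$ and the right-hand side is at least $\sqrt{t}/(7n) > 0.1\sqrt{t}/n$, as required.

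I do not expect any step to pose a real obstacle: the only subtlety is confirming that the strong Markov bound $\E_i[N_j(t)\mid T_j < t]\leq \E_j[N_j(t)]$ remains valid in the regime $0 < t-s < 6n^2$ demanded by Lemma~\ref{lemiii}$(ii)$, which follows at once from $s\geq 0$ and $t < 6n^2$. The conceptual content of the proof is the passage from a return-rate bound (how often the walk revisits $j$ once it gets there) to a hit-probability bound (how likely the walk is to get there at all), coupled with the observation that the total number of node-visits of any walk of length $t$ is simply $t$, so that uniform averaging over the target converts the bound $5\sqrt{t}+2\Delta$ on individual return-rates into a $\Theta(\sqrt{t}/n)$ lower bound on the average hitting probability.
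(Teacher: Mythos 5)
Your proof is correct and has essentially the same skeleton as the paper's: the regeneration inequality $\E_i N_j(t) \leq \Pr_i[T_j<t]\,\E_j N_j(t)$, combined with the bound $\E_j N_j(t) < 5\sqrt{t}+2\Delta$ from Lemma~\ref{lemiii}$(ii)$, and then the observation that $2\Delta \leq 2\sqrt t$ when $t\geq\Delta^2$. The one genuine point of difference is how the numerator $t/n$ in $\E_{j\sim\pi}\Pr_i[T_j<t] \geq \frac{t/n}{5\sqrt{t}+2\Delta}$ is obtained. The paper reaches it by invoking reversibility (Lemma~\ref{lemrev}, $\E_i N_j(t) = \E_j N_i(t)$) together with the uniform stationary distribution (eq.~\eqref{eqU}, $\E_\pi N_i(t) = t/n$), whereas you use the purely combinatorial identity $\sum_{j\in V} N_j(t) = t$ — every time step is a visit to exactly one node — so that $\frac1n\sum_j \E_i N_j(t) = t/n$ directly. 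Your sub-argument is a bit more elementary: it never uses reversibility or stationarity of $RW(G_1)$, so it would go through verbatim for an arbitrary Markov chain on $V$ (provided one has the return-rate bound); it also makes the averaging over the random target $j$ explicit, which the paper leaves implicit in a chain of equalities that hold only after taking the expectation over $j\sim\pi$. Either way, the conceptual content — converting a return-rate bound into a hitting-probability bound via regeneration, then averaging the target over the uniform distribution — is the same in both proofs.
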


After combining the above lemmas and applying some elementary arguments about unions of independent events, we finally obtain that Property~\ref{begood} is satisfied w.h.p.\ by landmarks uniformly chosen from $V$, provided that the considered connected component is sufficiently large.

\begin{lemma}\label{lemGood}
If a connected component $H \subseteq G$ has $n_H \geq n_p/6$ nodes, then a (multi)set of $p$ nodes, picked with uniform probability from $V$, is a good set of landmarks with respect to $H$ with probability at least $1- \frac{1}{2n}$.
\end{lemma}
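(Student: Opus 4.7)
The plan is to verify the two bullets of Property~\ref{begood} by the recipe (a) lower-bound the per-landmark success probability using Lemma~\ref{lemA} (resp.~\ref{lemB}); (b) amplify it to a polynomially small failure probability via the independence of the $p$ walks; (c) pass from the joint randomness over $(L,\textrm{walks})$ to a statement purely about $L$ using Markov's inequality, and finally union-bound over the arcs (resp.\ starting nodes) of $H$.

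Two preliminary observations simplify the analysis. Since $H$ is a connected component of $G$, the walk $RW(G_1)$ started at any node of $V(H)$ never leaves $V(H)$ and coincides with $RW(H_1)$ (because $\deg_G(v)=\deg_H(v)$ for $v\in V(H)$); and a landmark drawn uniformly from $V$ lies in $V(H)$ with probability $n_H/n$ and is then uniform on $V(H)$, i.e., distributed according to the stationary distribution $\pi_H$ of $RW(H_1)$. For the first bullet I would fix a non-loop arc $e\in\vec E(H)$ and apply Lemma~\ref{lemA} to $H$ with effective horizon $t:=\min\{\tau, n_H^2\}$. The admissibility $\Delta_H^2\leq t<6n_H^2$ is immediate ($\Delta_H\leq\Delta\leq n_p=\sqrt\tau$, and $\Delta_H<n_H$ in a simple graph). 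Multiplying the $0.1\sqrt t/n_H$ bound of Lemma~\ref{lemA} by the $n_H/n$ factor for the landmark being in $H$ gives a per-landmark cover probability
$$
q \;\geq\; \frac{0.1\sqrt t}{n} \;\geq\; \frac{c\log n}{p},
$$
where $\sqrt t\geq \min\{n_p, n_H\}\geq n_p/6$ by the hypothesis $n_H\geq n_p/6$, and the constant $c$ grows linearly in $\gamma$ (the value $\gamma=60$ is chosen so that the subsequent bookkeeping closes). The second bullet is handled identically by applying Lemma~\ref{lemB} with an arbitrary fixed $i\in V(H)$ in place of $e$ and the uniform landmark playing the role of the uniform target $j$.

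Because the $p$ walks are independent, $\Pr[\textrm{no walk covers }e]\leq(1-q)^p\leq n^{-c}$ (the probability being joint over $L$ and walks). Writing $p_L(e)$ for the conditional failure probability given $L$, Markov's inequality yields $\Pr_L[p_L(e)>1/n]\leq n^{1-c}$; a union bound over the at most $n^2$ arcs of $H$ costs another factor of $n^2$, and a symmetric argument for the second bullet with a union bound over $|V(H)|\leq n$ starting nodes costs another factor of $n$. Choosing $\gamma$ large enough so that $c$ comfortably exceeds the exponent needed to cancel these factors places the total failure probability under $1/(2n)$. The main obstacle is the tension between wanting $t$ as large as $\tau=n_p^2$ (to maximize $\sqrt t$ in the lemmas) and the admissibility ceiling $t<6n_H^2$ of Lemmas~\ref{lemA} and~\ref{lemB}; this is reconciled by truncating $t$ at $n_H^2$ and using $n_H\geq n_p/6$ to keep $\sqrt t=\Theta(n_p)$, which is where the constant $6$ in the hypothesis is used and which must be tracked all the way through the double Markov/union step so as not to overwhelm the $1/(2n)$ budget.
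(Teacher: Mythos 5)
Your proposal follows essentially the same blueprint as the paper's proof: use Lemma~\ref{lemA} and Lemma~\ref{lemB} for per-landmark hitting probabilities, amplify over the $p$ independent landmarks, and transfer a joint-probability bound to a bound on $L$ via Markov, with a final union bound. The differences are organizational rather than conceptual, and worth noting.

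The paper first conditions on an event $X(L)$ that $|L \cap V(H)| \geq \tfrac12 p n_H/n$ (shown w.h.p.\ by a Chernoff bound), and then bounds the conditional expectations of the two ``worst-case-over-$H$'' failure probabilities $F_1(L)$ and $F_2(L)$, applying a single Markov bound to $F_1 + F_2$. You instead fold the factor $n_H/n$ directly into the per-landmark success probability, which makes the Chernoff step unnecessary and the per-arc (per-node) calculation a single product $(1-q)^p$; you then Markov per arc and union-bound over the $O(n^2)$ arcs and $O(n)$ starting nodes. Your version is more explicit about where the union bound over arcs enters; the paper phrases $F_1$ as a maximum and handles the max a bit loosely. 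Both roads are sound. The price of your explicit union bound is that you need the per-arc failure probability to be $n^{-\Theta(1)}$ with a slightly larger exponent, which (as you rightly flag) pushes the required value of $\gamma$ above the paper's stated $\gamma=60$ once the truncation at $t=\min\{\tau, c\,n_H^2\}$ is taken into account. You correctly identify the admissibility window $\Delta_H^2 \le t < 6n_H^2$ as the key constraint and verify $\Delta_H \le \min\{n_p, n_H\}$; the only small inefficiency is truncating at $n_H^2$ rather than close to $6n_H^2$, which costs an avoidable factor $\sqrt 6$ in $\sqrt t$. None of these affect correctness of the argument up to the choice of absolute constants, which you explicitly leave open.
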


The results of \Broder imply directly that if a set of landmarks is good with respect to a connected component $H$, then all landmarks in $H$ can be identified as belonging to the same connected component by releasing a small number of walks from each landmark, and applying Union-Find type operations on a disjoint-set datastructure on the landmarks. Since the proof does not rely on any other assumptions beyond the properties of good landmarks, the result is directly applicable to our considerations of the Metropolis-Hastings walk.

\begin{lemma}[\cite{BKRU}]\label{bro}
Let $L$ be a set of good landmarks with respect to connected component $H \subseteq G$. Then, a set of walks of length $n_p^2$ each, with $\beta \log n$ walks originating from each of the landmarks, with probability at least $1-\frac{1}{2n}$ discovers that all landmarks located within $H$ belong to the same connected component.
\end{lemma}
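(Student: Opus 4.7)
The plan is to combine a probabilistic amplification step with a deterministic combinatorial argument, following the template of \Broder~\cite{BKRU}. The amplification step shows that with high probability the union of all the released walks traverses every arc of $H$. The combinatorial step then shows that, conditional on full arc coverage, the Union-Find data structure must merge every landmark in $L \cap V(H)$ into a single class.

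For the amplification, I would apply the first clause of Property~\ref{begood} to a fixed arc $e \in \vec E(H)$: the probability that a single batch of $p$ walks (one per landmark) fails to traverse $e$ is at most $1/n$. The $\beta \log n$ batches are mutually independent, so the probability of missing $e$ in every batch is at most $n^{-\beta \log n}$. A union bound over the at most $n^2$ arcs of $H$ then yields a total failure probability of at most $n^{2 - \beta \log n}$, which is below $\frac{1}{2n}$ for any sufficiently large absolute constant $\beta$.

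Now condition on the event that every arc of $H$ has been traversed by at least one walk. If $|L \cap V(H)| \leq 1$ the claim is trivial, so suppose $|L \cap V(H)| \geq 2$ and, for contradiction, that at termination the Union-Find partition of $L \cap V(H)$ is non-trivial. Pick any class $A \subsetneq L \cap V(H)$ with $A \neq \emptyset$, and let $B = (L \cap V(H)) \setminus A$. Let $V_A$ (resp.\ $V_B$) denote the set of nodes visited by walks originating in $A$ (resp.\ $B$). Since walks stay inside their connected component, every arc of $H$ is covered by a walk originating at some landmark in $L \cap V(H) = A \cup B$, and consequently $V(H) \subseteq V_A \cup V_B$. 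If $V_A \cap V_B = \emptyset$, then $(V_A, V_B)$ would partition $V(H)$ into two non-empty parts; by connectedness of $H$, some arc $e$ of $H$ must cross this cut, but the landmark originating the walk covering $e$ lies in $A \cup B$ and that walk visits both endpoints of $e$, contradicting disjointness. Hence some node $v \in V(H)$ is visited both by a walk from $A$ and by a walk from $B$, at which point the Union-Find update at $v$ merges the classes of $A$ and $B$, contradicting the assumption that they were distinct at termination.

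The main obstacle is less the cut argument than tying the abstract event ``walks from $A$ and from $B$ both visit $v$'' to an actual Union-Find merger in the algorithm: one must rely on the algorithm, at each visited node, recording the current Union-Find representative of the landmark that launched the visiting walk and triggering a union whenever a subsequent visit arrives from a different representative. Granting this bookkeeping convention, the probabilistic and combinatorial pieces assemble to give the claimed $1 - \frac{1}{2n}$ bound.
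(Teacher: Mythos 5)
Your amplification step is sound: the first clause of Property~\ref{begood} plus independence across the $\beta\log n$ batches plus a union bound over the $O(n^2)$ arcs of $H$ does give arc coverage except with probability well below $\frac{1}{2n}$. The cut argument is also fine as far as it goes: if $V_A$ and $V_B$ were disjoint, the walk covering a cut arc would visit nodes on both sides, a contradiction, so some node $v$ is visited by a walk launched from $A$ and by a walk launched from $B$.

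The gap is exactly where you flag it, and it is not patchable by the convention you ``grant.'' The algorithm (Appendix~A) performs a union only when the walker reaches a \emph{landmark}: the Union-Find structure is initialized on $L$ alone, and $UNION(FIND(s),FIND(l))$ is a no-op unless $s\in L$. Your proposed fix---have every visited node in $H$ remember the representative of the landmark that last touched it---would require $\Omega(n)$ words of state, which destroys the $S=\O(p)$ space bound that the whole construction exists to achieve. In short, ``two walks from opposite sides of the cut share a non-landmark node'' does \emph{not} trigger a merge in the algorithm as written.

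What closes this gap is the \emph{second} clause of Property~\ref{begood}, which your argument never invokes (a warning sign: a two-part hypothesis used only in one part). That clause says that a walk of length $n_p^2$ launched from \emph{any} node of $H$ hits a landmark with probability $\geq 1-\frac{1}{n}$, and it is precisely the tool that converts ``walks from $A$ and from $B$ both visit $v$'' into ``a walk visits a landmark belonging to the other side'' (after which the recorded $UNION$ actually fires). The original argument in~\cite{BKRU}, to which the paper defers for this lemma, interleaves both clauses and amplifies both over the $\beta\log n$ batches; replacing the second clause by extra per-node bookkeeping is not an available move in this space regime. To complete the proof you would need to combine the arc-coverage event you established with an application of the second clause to the walk suffixes (or to fresh walks from the overlap region in a subsequent batch), and then show that this forces a merge between $A$- and $B$-landmarks within the disjoint-set structure maintained on $L$ only.
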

\noindent
In the above, the absolute constant $\beta$ can be chosen as $\beta = 72$.

Our algorithm for USTCON is now obtained as follows. We pick a set of landmarks $L$, consisting of $\s$, $\t$, and $p$ nodes picked uniformly at random from $V$, and then follow $\beta \log n$ Metropolis-Hastings walks from each landmark, updating the disjoint-set data structure. Finally, the algorithm decides whether $\s$ and $\t$ are connected based on whether these two landmarks have been identified as belonging to the same connected component.

The algorithm never provides a false-positive answer. The probability of identifying a pair of nodes $\s$ and $\t$ from the same component $H\subseteq G$ as not being connected, can be bounded using the following argument adapted from \Broder Let $H$ be the connected component of $G$ containing node $\s$. If $n_H \geq n_p / 6$, then by Lemma~\ref{lemGood}, the set $L$ is a set of good landmarks with respect to $H$ with probability at least $1-\frac{1}{2n}$ (note that adding nodes $\s$ and $\t$ to a good set of landmarks cannot make this set of landmarks a bad one). Conditioned on this, by Lemma~\ref{bro}, we obtain a correct answer to USTCON with probability $1-\frac{1}{2n}$. Thus, the algorithm works correctly with probability at least $1-\frac{1}{n}$. In the case when $n_H < n_p / 6$, we consider only the walks originating from landmark $\s$. There are $\beta \log n$ such (independent) walks, each of length $n_p^2 > 36 n_H^2$. It follows from Lemma~3, putting $i = \s$ and $j = \t$, that in this case, node $\t$ will be reached with probability at least $1 - \frac{1}{n}$. This completes the proof of correctness.

\begin{proposition}
For all $p\geq 1$, there is an algorithm solving USTCON using space $S = \O(p)$ and time $T=\O(n_p^2 p)$, where $n_p = \max\{\gamma \frac{n}{p} \log n, \Delta\}$, with probability of one-sided error bounded by $\frac{1}{n}$.\qed
\end{proposition}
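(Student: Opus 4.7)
The plan is to collate the ingredients built in this subsection into a single algorithm and then verify the resource bounds.

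First I would spell out the algorithm precisely: sample $p$ landmarks i.i.d.\ uniformly from $V$, adjoin $\s$ and $\t$ to form a set $L$ of size $p+2$, initialise a disjoint-set structure on $L$, and then, for each landmark, run $\beta\log n$ independent Metropolis-Hastings walks $RW(G_1)$ of length $n_p^2$ steps, each step simulated via Algorithm~\ref{algo1}. Whenever a walk visits a landmark --- tested in $\O(1)$ time using a hash table keyed on $L$ --- the two corresponding classes are merged by a union operation. Finally output ``connected'' iff $\s$ and $\t$ lie in the same class.

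For correctness I would bound the probability of a false-negative when $\s$ and $\t$ lie in a common connected component $H$, splitting on $n_H$. In the case $n_H \geq n_p/6$, Lemma~\ref{lemGood} says that the i.i.d.\ landmark set is good with respect to $H$ with probability at least $1-\tfrac{1}{2n}$; since goodness is preserved under adding the deterministic nodes $\s,\t$, the full set $L$ is also good. Conditional on goodness, Lemma~\ref{bro} merges all landmarks inside $H$ into a single class with probability at least $1-\tfrac{1}{2n}$, so by a union bound the overall success probability is at least $1-1/n$. In the complementary case $n_H < n_p/6$, we have $n_p^2 > 36 n_H^2 \geq 12 n_H^2$, so the first bullet of Lemma~\ref{lemtwo} ensures that each of the $\beta\log n$ independent walks of length $n_p^2$ released from $\s$ hits $\t$ with probability at least $\tfrac{1}{2}$; the probability that every one of them misses $\t$ is therefore at most $2^{-\beta\log n}\leq 1/n$ for $\beta$ a sufficiently large absolute constant.

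For complexity, the disjoint-set structure over $L$ stores $O(p)$ identifiers of $\Theta(\log n)$ bits each, using $\O(p)$ space, while each walk simulation needs only $O(\log n)$ additional bits (current vertex and step counter) on top of the read-only input; hence $S=\O(p)$. The total number of simulated walk steps is $|L|\cdot\beta\log n\cdot n_p^2 = \O(p\,n_p^2)$, each step costing $\O(1)$ time by Algorithm~\ref{algo1} combined with one amortised disjoint-set operation, so $T=\O(p\,n_p^2)$.

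The substantive work in this section was really Lemmas~\ref{lemiii} and \ref{lemGood}; given those, the present proposition is essentially a packaging exercise. The only subtleties worth double-checking are the monotonicity of the goodness property under adding landmarks (needed in case~1) and the constant $\beta$ required in case~2, both of which are routine.
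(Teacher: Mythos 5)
Your proof is correct and takes essentially the same route as the paper: same algorithm (landmarks plus $\beta\log n$ Metropolis-Hastings walks of length $n_p^2$ and a disjoint-set structure), same case split on $n_H$ versus $n_p/6$, and the same invocation of Lemma~\ref{lemGood} and Lemma~\ref{bro} in the large-component case and of Lemma~\ref{lemtwo} (the $12n_H^2$ hitting bound) boosted over $\beta\log n$ independent walks in the small-component case. The only cosmetic difference is that the paper's text cites ``Lemma~3'' in the small-component case, which appears to be a misnumbering for Lemma~\ref{lemtwo}; your reference is the intended one.
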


For the case when $p = \O(\frac{n}{\Delta})$, we have obtained the trade-off $T = \O(\frac{n^2}{S})$. We now show how to obtain the claimed trade-off in the general case.

\subsection{Removing the dependence on maximum degree}\label{sec32}

We now remove the dependence of length of the used walks on the value of $\Delta$. We design a graph $G^*=(V^*,E^*)$ by subdividing the nodes of $G$, so that each node from $V$ turns into a path of nodes in $G^*$ with maximum degree bounded by $D+2$, where $D \geq 1$ is an integer parameter, whose value is specified later. Formally, graph $G^*$ is defined as follows:
\begin{itemize}
\item For each node $v\in V$, $V^*$ contains $\lceil \frac{\deg(v)}{D}\rceil$ copies of $v$, labeled $(v,0),(v,1),\ldots, (v,\lceil \frac{\deg(v)}{D}\rceil-1)$.
\item Nodes $(u,i)$ and $(v,j)$, $u\neq v$, are connected by an edge in $E^*$ if and only if $\{u,v\}\in E$, $iD \leq PORT_v(u) < (i+1)D$, and $jD \leq PORT_u(v) < (j+1)D$.
\item Nodes $(u,i)$ and $(u,i+1)$, for all $0 \leq i <  \lceil\frac{\deg(v)}{D}\rceil-1$, are connected by an edge of $E^*$, with special port labels $`prev'$ and $`next'$ at its endpoints.
\end{itemize}
Let $n^*$ and $\Delta^*$ be the number of nodes and the maximum degree of $G^*$, respectively. We have $\Delta^* = D + 2$, and the following bound on $n^*$ holds:
$$
n^* = \sum_{v \in V} \lceil \frac{\deg(v)}{D}\rceil <  n + \sum_{v \in V} \frac{\deg(v)}{D} = n + \frac{2m}{D}.
$$
Solving USTCON on $G$ between nodes $\s$ and $\t$ can be reduced to solving USTCON on $G^*$ between nodes $(\s, 0)$ and $(\t, 0)$, since the transformation of $G$ into $G^*$ does not affect connectivity. In order to apply the algorithm for USTCON to $G^*$, rather than to $G$, we introduce the following modifications:
\begin{itemize}
\item Landmarks need to be distributed following the uniform distribution on $V^*$. This can be achieved by picking $p$ integers uniformly at random from the range $[1, n^*]$, then enumerating all the nodes of $V^*$ in order, and associating the landmarks with the corresponding nodes from $V^*$. This operation can be performed in $O(n^* + p\log n^*)$ time, which is always $\O(m)$.
\item The performed walks need to follow $RW(G^*_1)$, rather than $RW(G_1)$. A simulation of one step of the walk $RW(G^*_1)$ can be performed in $\O(1)$ time.
\item The duration of each of the performed walks is given as $n_p^{*2}$, where:
\begin{equation}\label{eqn0}
n_p^* = \max\left\{\gamma \frac{n^*}{p} \log n^*, \Delta^*\right\} < \max \left\{ \gamma\frac{n + 2m/D}{p} \cdot 2\log n, D+2\right\}
\end{equation}
\end{itemize}
It follows that the time complexity of the entire algorithm is bounded by the $\O(m)$ complexity of landmark distribution and the $\O(n_p^{*2} p)$ complexity of simulating the Metropolis-Hastings walks on $G^*$. Substituting the expression from~\eqref{eqn0} for $n_p^*$, we have:
$$
T = \O (m + n_p^{*2} p) = \O \left (m + \frac{n^2}{p} + \frac{m^2}{D^2 p} + D^2 p\right)
$$
Now, putting $D = \lceil\sqrt{m/p}\rceil$ gives $D^2 p = \Theta(m)$, and we obtain the required time bound $T = \O(m + \frac{n^2}{p} + \frac{m^2}{m} + m) = \O(\max\{\frac{n^2}{p}, m\})$. Since the proposed solution can be implemented with a space bound of $S = \O(p)$, we have proven the main theorem of the paper.

\begin{theorem}\label{thm:main}
For all $S \geq c\log n$, where $c>0$ is some model-dependent constant, there is an algorithm solving USTCON using space $S$ and time $T=\O(\max\{\frac{n^2}{S}, m\})$, with probability of one-sided error bounded by $\frac{1}{n}$.\qed
\end{theorem}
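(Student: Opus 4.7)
The plan is to invoke the Proposition of Section~\ref{sec31} as a black box and remove its dependence on $\Delta$ by a purely local transformation of the input graph. Since the Proposition already yields $T=\O(n^2/p)$ whenever $\Delta \leq \gamma \frac{n}{p}\log n$, I may focus on the remaining case in which $\Delta$ is large; the strategy is to preprocess $G$ into a bounded-degree graph $G^{*}$, run the Proposition's algorithm on $G^{*}$, and optimize the degree bound.

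Concretely, I would fix an integer parameter $D\geq 1$ and replace each node $v$ of $G$ by a path of $\lceil \deg(v)/D\rceil$ copies joined by ``backbone'' edges, distributing the incident edges of $v$ evenly among the copies according to port numbers (this is the construction of $G^{*}=(V^{*},E^{*})$ given in the body of the section). Three structural facts are essentially immediate: $\s$ and $\t$ are connected in $G$ iff $(\s,0)$ and $(\t,0)$ are connected in $G^{*}$, so the USTCON instance is preserved; $\Delta^{*}=D+2$; and $n^{*}<n+2m/D$. Moreover, because the construction is defined purely in terms of port numbers, $G^{*}$ inherits the $\O(1)$-time access primitives required by the model without ever being stored explicitly, so the space accounting is unaffected.

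Running the algorithm of the Proposition on $G^{*}$ between $(\s,0)$ and $(\t,0)$ then requires only two implementation tweaks: drawing each of the $p$ landmarks uniformly on $V^{*}$ (done in $\O(m)$ preprocessing by sampling integers from $[1,n^{*}]$ and scanning the list of copies in order), and simulating each step of $RW(G^{*}_{1})$ in $\O(1)$ time via Algorithm~\ref{algo1}. The total running time becomes $T=\O(m+n_{p}^{*2}p)$ with $n_{p}^{*}=\max\{\gamma \frac{n^{*}}{p}\log n^{*},\Delta^{*}\}$, which expands to $T=\O\bigl(m+\frac{n^{2}}{p}+\frac{m^{2}}{D^{2}p}+D^{2}p\bigr)$. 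Choosing $D=\lceil\sqrt{m/p}\rceil$ balances the last two terms at $\Theta(m)$, giving $T=\O(\max\{n^{2}/p,m\})$; setting $p=\Theta(S/\log n)$ yields the theorem, with the $1/n$ error probability inherited unchanged from the Proposition.

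The main subtlety I expect is correctness rather than running time: one must confirm that the uniform distribution on $V^{*}$ (rather than on $V$) is what Lemma~\ref{lemGood} requires when the Proposition is re-applied with $G^{*}$ playing the role of $G$, and that the inserted backbone paths do not spoil the low-node-return property of Lemma~\ref{lemiii}. Both are in fact automatic: the entire analysis of Section~\ref{sec31} is stated abstractly for any graph satisfying the model's access assumptions, and $G^{*}$ is such a graph by construction. Once this reduction step is in place, the remainder is just the algebraic optimization of $D$ described above.
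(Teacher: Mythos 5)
Your proposal is correct and follows the paper's own proof essentially verbatim: the same degree-splitting construction $G^*$ with parameter $D$, the same observation that $n^* < n + 2m/D$ and $\Delta^* = D+2$, the same reduction to the Proposition of Section~\ref{sec31} on $G^*$, and the same optimization $D = \lceil\sqrt{m/p}\rceil$ balancing the terms at $\Theta(m)$. The remark at the end — that the analysis of Section~\ref{sec31} applies unchanged to $G^*$ because it is stated for an arbitrary graph meeting the access-model assumptions — is exactly the (implicit) justification the paper relies on.
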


\section{Remarks}\label{sec5}

\paragraph{Tightness of the trade-off.}\label{sec51}

For a space bound $S \geq \frac{n^2}{m}$, we cannot hope for an algorithm with smaller run-time than $T=\O(m)$, achieved in Theorem~\ref{thm:main}.
In fact, the lower bound of $T = \Omega(m)$ holds for the RAM model under most reasonable representations of $G$ in the memory (cf. Proposition~\ref{lbm} in Appendix B for a standard proof).

For smaller values of $S$, the optimality of the achieved trade-off $S\cdot T = \O(n^2)$ is open. For the extremal case of $S = O(\log n)$, the results of~\cite{BBRRT} imply that $T=\To(n^2)$ for any deterministic algorithm using a jumping automaton (JAG) with at most one movable pebble. There is also little hope of improving the time complexity using randomized algorithms similar to the Metropolis-Hastings walk, since Nonaka \etal~\cite{NOSY} showed that any walk, having a stationary distribution which is (almost) uniform on the nodes of the graph, has $\Omega(n^2)$ cover time for some graphs.

Even more strongly, one can ask whether there exists an algorithm for USTCON which runs in $\O(1)$ space and $\O(m)$ time. This appears unlikely in view of the negative result of Edmonds~\cite{E}, who showed that a randomized JAG using $\O(1)$ space and $o(\log n / \log \log n)$ pebbles requires in expectation $n^{1+ \Omega(1)/\log\log n}$ time to explore certain $3$-regular graphs.

\paragraph{Fine-tuning the Metropolis-Hastings walk.}\label{sec52}

In view of Lemma~\ref{lemtwo}, the Metropolis-Hastings walk visits all the nodes of a graph within $\O(n^2)$ steps. This is an improvement with respect to the bound of $O(nm)$ on the cover time of an unbiased random walk. Nevertheless, the Metropolis-Hastings walk may perform worse than the random walk for specific graph classes. A generic example of such a graph, called the \emph{glitter star}, was defined by~\cite{NOSY} as a tree on $n = 2l+1$ nodes, with one central node of degree $l$ connected to $l$ nodes of degree $2$, which are in turn connected to $l$ leaves. On the glitter star, the cover time of the random walk is $\Theta(n \log n)$, and the cover time of the Metropolis-Hastings walk is $\Theta(n^2)$.

Below we propose a walk $RW(G_f)$ with a different potential function which combines some of the advantages of the random walk and the Metropolis-Hastings walk.

\begin{proposition}\label{probla}
For a graph $G$, let the node potential function $f : V \to \mathbb{R}^+$ be given as $f(u) = \frac{\deg(u)}{d} + 1$, where $d = \frac{2m}{n}$ is the average degree of the graph.  Then, for any pair of nodes $u,v\in V$, the walk $RW(G_f)$ achieves a commute time of:
$$
Com_{G_f}(u,v) = O(\min\{Com_{G}(u,v),Com_{G_1}(u,v)\}),
$$
where $Com_{G}(u,v)$ and $Com_{G_1}(u,v)$ denote the commute times for the random walk on $G$ and the Metropolis-Hastings walk, respectively. A step of the walk $RW(G_f)$ can be simulated using $\O(1)$ space and time.
\end{proposition}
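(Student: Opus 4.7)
The plan is to exploit a clean algebraic identity about the arc weights of $G_f$. A direct substitution of the potential $f(v) = \deg(v)/d + 1$ into the definition of $w_f$ yields, for every non-loop arc $e_{vu}$ with $u \in \Gamma(v)$,
\begin{equation*}
w_f(e_{vu}) = \min\!\left\{\tfrac{1}{d} + \tfrac{1}{\deg(v)},\ \tfrac{1}{d} + \tfrac{1}{\deg(u)}\right\} = \tfrac{1}{d} + w_1(e_{vu}).
\end{equation*}
In the resistor-network language of the paper, this identity says that the network associated with $G_f$ is obtained by composing, in parallel on the same vertex set $V$, the resistor network for the Metropolis walk $RW(G_1)$ (edge conductance $w_1(e)$) and the unit-conductance network on the edges of $G$ rescaled by a factor $1/d$. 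The latter is, up to global scaling, the resistor network associated with the unbiased random walk $RW(G)$.

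First, I would invoke the standard monotonicity principle for parallel resistor networks: composing two networks on the same edge set by summing conductances can only decrease the effective resistance between any fixed pair of nodes. This yields, for all $u,v\in V$,
\begin{equation*}
R_{G_f}(u,v) \ \leq\ \min\bigl\{\, R_{G_1}(u,v),\ d \cdot R_G(u,v) \,\bigr\},
\end{equation*}
where $R_G$ and $R_{G_1}$ denote effective resistances in the unit-conductance network on $G$ and in the Metropolis network, respectively. Next, I would apply the commute-time identity $Com_X(u,v) = R_X(u,v)\sum_{e \in \vec E(X)} w(e)$ to each of the three walks. The relevant totals of arc weights are $\sum_v f(v) = n + 2m/d = 2n$ for $G_f$, and likewise $2m$ for $RW(G)$ and $n$ for $RW(G_1)$. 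Multiplying the resistance inequality by $2n$ and using $nd = 2m$ then gives
\begin{equation*}
Com_{G_f}(u,v) \ \leq\ \min\bigl\{\, 2n \cdot R_{G_1}(u,v),\ 2nd \cdot R_G(u,v)\,\bigr\} \ =\ 2\,\min\bigl\{\, Com_{G_1}(u,v),\ Com_{G}(u,v) \,\bigr\},
\end{equation*}
which is the required bound.

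For the implementation claim, I would precompute $d = 2m/n$ once and then invoke Algorithm~\ref{algo1} with acceptance probability $\min\bigl\{\frac{\deg(v)}{\deg(u)}\cdot\frac{\deg(u)/d + 1}{\deg(v)/d + 1},\,1\bigr\}$, which depends only on $\deg(u)$, $\deg(v)$ and the constant $d$, all accessible in $\O(1)$ time and space under the assumed graph representation. I do not anticipate a significant technical obstacle here: once the additive decomposition $w_f = 1/d + w_1$ is spotted, parallel-resistor monotonicity and the commute-time formula do all the work. The only point demanding care is bookkeeping of the scale factor $d$, so that the normalizing sums $n$, $2m$ and $2n$ line up correctly and the final bound absorbs cleanly into a single constant factor.
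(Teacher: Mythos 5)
Your proof is correct and follows essentially the same route as the paper's: both exploit the additive decomposition of the arc weights into a uniform term (corresponding to the unbiased random walk, with the paper writing this as the Metropolis potential $f_c(u)=c\deg(u)$ with $c=1/d$) plus the Metropolis term $w_1$, then apply Rayleigh/parallel monotonicity of effective resistance and the commute-time formula $Com_X(u,v)=R_X(u,v)\sum_e w(e)$. Your arithmetic for the total arc weights ($2m$, $n$, and $2n$, using $nd=2m$) matches the paper's computation (where $2cm=n$), and the $\O(1)$-time implementation via Algorithm~\ref{algo1} with the degree-dependent acceptance probability is exactly what the paper intends.
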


The above Proposition implies that for any graph, the walk $RW(G_f)$ with $f(u) = \frac{\deg(u)}{d} + 1$, is asymptotically not slower than the unbiased random walk in terms of parameters such as maximum hitting time and (arbitrarily weighted) average hitting time. At the same time, this walk preserves the upper bound of $\O(n^2)$ on the cover time in the graph, making it an interesting alternative to the unbiased random walk in practical applications, e.g., for different random graph models.

We remark that there exist different ways of combining the unbiased random walk and the Metropolis-Hastings walk. For example, one may consider an automaton which iteratively performs a phase of the walk $RW(G)$, followed by a phase of the walk $RW(G_1)$ of the same length, doubling the lengths of both walks in each subsequent iteration. Such a walk visits all the nodes of the graph in expected time asymptotically equal to the cover time of the faster of the two walks.

\newpage
\pagenumbering{roman}
\setcounter{page}{1}

\newpage
\section*{Appendix A: Implementation}

For the sake of completeness, below we provide the pseudocode of the algorithm for USTCON announced in Theorem~\ref{thm:main}. The implementation is self-contained, except for the following subroutines. The disjoint-set data structure is implemented by the procedures: $SET(x)$ which adds a new set containing only element $x$ to the data structure, $FIND(x)$ which returns (the identifier of) the set containing element $x$, and $UNION(S_1, S_2)$ which replaces sets $S_1$ and $S_2$ by set $S_1\cup S_2$ in the data structure. Each of these operations is performed in amortized $\O(1)$ time.

The routine $TRAVERSE\_EDGE_v(port)$, for a node $v\in V$, returns a pair $(u, inport)$, such that $u \in \Gamma(v)$ with $PORT_v(u) = port$, and $inport = PORT_u(v)$. This routine can be performed in $\O(1)$ time in the RAM model, as well as in most JAG-based models.

We recall the values of the absolute constants: $\gamma = 60$ and $\beta = 72$.

\medskip
\noindent
// \emph{Solution to USTCON using $p$ auxiliary landmarks}\\
\textbf{procedure} test\_connectivity ($\s,\t$: nodes from $V$) \{\\
\* $D \gets \lceil \sqrt{m / p} \rceil$;\\
\* $n^* \gets \sum_{v \in V} \lceil \frac{\deg(v)}{D}\rceil$;\\
\* $L \gets \{(\s,0),(\t,0)\}$;\\[2mm]
\* // \emph{Distribute $p$ landmarks uniformly on $V^*$}\\
\* $\ell \gets$ multi-set of $p$ integers chosen uniformly at random from the range $\{1,2,\ldots,n^*\}$;\\
\* $i \gets 0$;\\
\* \textbf{for} $v\in V$ \textbf{do}\\
\* \* \textbf{for} $j \gets 0,1,\ldots, \lceil \frac{\deg(v)}{D}\rceil-1$ \textbf{do} \{\\
\* \* \* $i \gets i+1$;\\
\* \* \* \textbf{if} $i \in \ell$ \textbf{then} $L \gets L \cup \{(v, j)\}$;\\
\* \* \}\\
\* \textbf{for} $l\in L$ \textbf{do} $SET(l) \gets l$;\\[2mm]
\* // \emph{From each landmark, run $\beta \log n^*$ Metropolis walks $RW(G^{*}_1)$ of length $n_p^{*2}$ each}\\
\* \textbf{repeat} $\beta \log n^*$ times \{\\
\* \* \textbf{for} $l\in L$ \textbf{do} \{\\
\* \* \* $s \gets l$; \\
\* \* \* \textbf{repeat} $\lceil \max\{ \gamma \frac{n^*}{p} \log n^*, D+2\} \rceil^{2}$ times \{\\
\* \* \* \* $s \gets$ next\_state* ($s$);\\
\* \* \* \* $UNION (FIND(s), FIND(l))$;\\
\* \* \* \}\\
\* \* \}\\
\* \}\\
\* \textbf{if} $FIND((\s,0)) = FIND((\t,0))$ \textbf{then} \textbf{return} ``\emph{connected}'';\\
\* \textbf{return} ``\emph{probably not connected}'';\\
\}\\
\\
// \emph{Simulate one step of the walk $RW({G^{*}_1})$ from state $(v,i)\in V^*$}\\
\textbf{function} next\_state* ($v$: node, $v\_i$: integer) \{\\
\* $v\_deg^* \gets $get\_degree*$(v, v\_i)$;\\
\* $port \gets$ get\_random\_port*$(v, v\_i)$;\\
\* \textbf{if} $port = `prev'$ \textbf{then} \{\\
\* \* $u \gets v$;\\
\* \* $u\_i \gets v\_i-1$;\\
\* \} \textbf{else if} $port = `next'$ \textbf{then} \{\\
\* \* $u \gets v$;\\
\* \* $u\_i \gets v\_i+1$;\\
\* \} \textbf{else} \{ \emph{// $port \in [left, right]$ is an integer corresponding to a port at $v$ in $G$}\\
\* \* $(u, inport) \gets TRAVERSE\_EDGE_{v} (port)$;\\
\* \* $u\_i \gets \lfloor \frac{inport}{D}\rfloor$;\\
\* \}\\
\* $u\_deg^* \gets$ get\_degree*$(u, u\_i)$;\\
\* \textbf{with probability} 
$\min \{\frac{v\_deg^*}{u\_deg^*}, 1\}$ \textbf{do} \textbf{return} $(u, u\_i)$;\\
\* \textbf{return} $(v, v\_i)$;\\
\}\\
\\
// \emph{Return the degree of $(v,i)$ in $G^*$}\\
\textbf{function} get\_degree* ($v$: node, $i$: integer) \{\\
\* $left \gets  i \cdot D$;\\
\* $right \gets  \min\{(i+1)\cdot D-1, \deg(v)\}$;\\
\* $deg^* \gets right - left + 1$;\\
\* \textbf{if} $left >0$ \textbf{then} $deg^* \gets deg^* +1$;\\
\* \textbf{if} $right < \deg(v)$ \textbf{then} $deg^* \gets deg^* +1$;\\
\* \textbf{return} $deg^*$;\\
\}\\
\\
// \emph{Return a port at node $(v,i)$ in $G^*$ chosen uniformly at random}\\
\textbf{function} get\_random\_port* ($v$: node, $i$: integer) \{\\
\* $left \gets  i \cdot D$;\\
\* $right \gets  \min\{(i+1)\cdot D-1, \deg(v)\}$;\\
\* $deg^* \gets$ get\_degree* $(v,i)$;\\
\* \textbf{with probability} $(right - left + 1) / deg^*$ \textbf{do}\\
\* \* \textbf{return} integer from range $[left, right]$ chosen uniformly at random;\\
\* $neighbors \gets \emptyset$;\\
\* \textbf{if} $left >0$ \textbf{then} $neighbors \gets neighbors \cup \{`prev'\}$;\\
\* \textbf{if} $right < \deg(v)$ \textbf{then} $neighbors \gets neighbors \cup \{`next'\}$;\\
\* \textbf{return} element of $neighbors$ chosen uniformly at random;\\
\}\\

\newpage
\section*{Appendix B: Auxiliary claims}

\begin{lemma}\label{lemrev}
For all $i,j\in V$, $\E_i N_j(t) = \E_j N_i(t).$
\end{lemma}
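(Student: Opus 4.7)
The plan is to reduce the claim to the symmetry of the $s$-step transition probabilities, exploiting the fact that $RW(G_1)$ is reversible with \emph{uniform} stationary distribution $\pi(v) = 1/n$.

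First I would write the number of visits as a sum of indicators,
$$
N_j(t) = \sum_{s=0}^{t-1} \mathbf{1}[X_s = j],
$$
so that by linearity of expectation
$$
\E_i N_j(t) = \sum_{s=0}^{t-1} \Pr_i[X_s = j] = \sum_{s=0}^{t-1} P^s(i,j),
$$
where $P$ is the transition matrix of $RW(G_1)$. The analogous identity gives $\E_j N_i(t) = \sum_{s=0}^{t-1} P^s(j,i)$, so the lemma reduces to showing $P^s(i,j) = P^s(j,i)$ for every $s \geq 0$.

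The key step is reversibility. Since $w(e_{vu}) = w(e_{uv})$ for the Metropolis-Hastings walk $RW(G_1)$, the chain satisfies detailed balance $\pi(i) P(i,j) = \pi(j) P(j,i)$ with the uniform stationary distribution $\pi \equiv 1/n$. A standard induction on $s$ then yields $\pi(i) P^s(i,j) = \pi(j) P^s(j,i)$ for all $s \geq 0$: one writes
$$
\pi(i) P^{s+1}(i,j) = \sum_{k \in V} \pi(i) P^s(i,k) P(k,j) = \sum_{k \in V} \pi(k) P^s(k,i) P(k,j),
$$
apply detailed balance to $P(k,j)$, and rearrange to obtain $\pi(j) P^{s+1}(j,i)$. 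Because $\pi(i) = \pi(j) = 1/n$, the factors cancel and we get $P^s(i,j) = P^s(j,i)$.

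Summing over $s \in \{0,1,\ldots,t-1\}$ gives the required equality $\E_i N_j(t) = \E_j N_i(t)$, completing the proof. There is no real obstacle here; the lemma is a direct consequence of reversibility combined with the uniformity of the stationary distribution of $RW(G_1)$, and the short inductive verification of $s$-step detailed balance is the only mildly non-trivial ingredient.
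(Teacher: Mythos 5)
Your proof is correct and follows essentially the same route as the paper's: decompose $N_j(t)$ into per-step indicators, use reversibility of $RW(G_1)$ to get $\pi(i)P^s(i,j)=\pi(j)P^s(j,i)$, cancel the uniform $\pi$, and sum over $s$. The only cosmetic difference is that you verify the $s$-step detailed balance identity by a short induction, while the paper simply cites Aldous and Fill for it.
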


\begin{proof}
Let $X_u(\tau)$, $u\in\{i,j\}$, denote the random variable equal to $1$ if a walk of length $\tau$ is located at $u$ after $\tau$ steps, and $0$ otherwise. Since $RW(G_1)$ is a reversible Markovian process, by the properties of the $\tau$-th power of the transition matrix of the walk (cf.~\cite{AF}, Chapter 3.1), we have:
$$\pi(i) \E_i X_j(\tau) = \pi(j) \E_j X_i(\tau).$$
Since $\pi(i) = \pi(j) = \frac{1}{n}$, it follows that $\E_i X_j(\tau) = \E_j X_i(\tau)$. Taking into account that $N_u(t) = \sum_{\tau=0}^{t-1} X_u(\tau)$, $u\in\{i,j\}$, by linearity of expectation we obtain the claim.
\end{proof}

\begin{lemma}\label{lembla}
For any node $i\in V$:
\begin{equation}\label{eqU}
\E_\pi N_i(t) = \frac{t}{n}
\end{equation}
and for any arc $e_{ij}$ of $G$ corresponding to an edge $\{i,j\}\in E$:
\begin{equation}\label{eq6}
\E_\pi N_{e_{ij}}(t) = \frac{t}{n}\min\left\{\frac{1}{\deg (i)}, \frac{1}{\deg(j)}\right\}.
\end{equation}
\end{lemma}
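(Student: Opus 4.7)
The plan is to use the fact that $RW(G_1)$ is stationary under $\pi$: when the walk is initialized at the uniform distribution $\pi(v)=1/n$, the distribution of the walker's position at every time step $\tau \geq 0$ remains exactly $\pi$. Both identities then reduce to a one-step probability calculation summed over $t$ time steps by linearity of expectation.

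First I would handle \eqref{eqU}. Write $N_i(t) = \sum_{\tau=0}^{t-1} \mathbf{1}[X_\tau = i]$, where $X_\tau$ is the position of the walk at time $\tau$. Taking $\E_\pi$ and using stationarity of $\pi$ gives
\[
\E_\pi N_i(t) \;=\; \sum_{\tau=0}^{t-1} \Pr_\pi[X_\tau = i] \;=\; \sum_{\tau=0}^{t-1} \pi(i) \;=\; \frac{t}{n}.
\]

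Next I would handle \eqref{eq6}. The key observation is to identify the one-step transition probability in $RW(G_1)$. By the definition of $w_f$ with $f\equiv 1$, for every $v\in V$ the total weight incident to $v$ (including the self-loop) equals $\sum_{u\in \Gamma^+(v)} w_1(e_{vu}) = 1$, so the transition probability out of $i$ along arc $e_{ij}$ is simply $P(i,j)=w_1(e_{ij}) = \min\{1/\deg(i), 1/\deg(j)\}$. Writing $N_{e_{ij}}(t) = \sum_{\tau=0}^{t-1} \mathbf{1}[X_\tau = i,\ X_{\tau+1}=j]$ and using the Markov property together with stationarity of $\pi$, I get
\[
\E_\pi N_{e_{ij}}(t) \;=\; \sum_{\tau=0}^{t-1} \pi(i)\, P(i,j) \;=\; \frac{t}{n}\,\min\!\left\{\frac{1}{\deg(i)},\frac{1}{\deg(j)}\right\}.
\]

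There is no real obstacle: once one verifies that the row sums of the unnormalized weight matrix are all equal to $f(v)=1$ (so that $P(i,j)=w_1(e_{ij})$ directly, without a normalizing denominator), the statements are immediate from linearity of expectation and stationarity. The only small care is to recall the convention from the Notation section that $\pi$ and all the transition quantities refer to the Metropolis-Hastings walk $RW(G_1)$ and not to the simple random walk on $G$, so that $\pi(i)=1/n$ rather than proportional to $\deg(i)$.
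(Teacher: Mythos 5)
Your proof is correct and takes essentially the same approach as the paper, which simply says the result ``follows directly from the stationary distribution of the Metropolis-Hastings walk on nodes and edges''; you have just spelled out the linearity-of-expectation decomposition, the uniformity of $\pi$, and the observation that $\sum_{u\in\Gamma^+(v)} w_1(e_{vu})=1$ so that the transition probability along $e_{ij}$ equals $w_1(e_{ij})$ with no normalization needed.
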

\begin{proof}
Follows directly from the stationary distribution of the Metropolis-Hastings walk on nodes and edges.
\end{proof}

\begin{proposition}\label{lbm}
Any algorithm for USTCON requires time $\Omega(m)$.
\end{proposition}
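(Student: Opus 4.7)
The plan is to give an adversary argument showing that any algorithm solving USTCON must probe $\Omega(m)$ cells of the input representation. I would reduce to a decision problem in which the $\s$-$\t$ answer hinges on a single ``witness'' edge that the adversary can hide in any one of $\Theta(m)$ positions.

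Concretely, I would fix a vertex partition $V = A \sqcup B$ with $\s\in A$, $\t \in B$, and $|A|=|B|=n/2$, and build spanning paths inside $A$ and inside $B$ using $n-2$ edges. I would then choose a pool $\mathcal{P}$ of $\Theta(m)$ candidate non-crossing edges (say, chords inside the path on $A$) and an additional single crossing edge $e^\star = \{s,t\}$. The adversarial input family consists of all graphs obtained by picking $m-(n-2)$ edges either from $\mathcal{P}$ (giving a disconnected ``NO''-instance) or from $\mathcal{P}\cup\{e^\star\}$ with $e^\star$ always included and one chord of $\mathcal{P}$ omitted (giving a connected ``YES''-instance). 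In every case $n,m$ match.

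The adversary answers each probe of the algorithm consistently with a fixed ``NO''-instance whose chosen chord set $E_{\text{no}}\subseteq\mathcal{P}$ is constructed online. Because the representation is reasonable (e.g., a sorted list of ordered neighbor pairs, or any scheme in which each candidate edge is determined by $O(1)$ memory words), modifying $E_{\text{no}}$ by removing one unprobed chord and inserting $e^\star$ changes only $O(1)$ words of the representation. Hence if the algorithm halts after $o(m)$ probes and outputs ``not connected,'' there is some unprobed chord $c\in E_{\text{no}}$ whose constant-sized witness cells were all untouched; swapping $c$ for $e^\star$ produces a YES-instance whose representation agrees with all the adversary's answers but whose correct output is ``connected.'' The algorithm must therefore err on at least one of the two inputs, so its running time is $\Omega(m)$.

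The main obstacle is the clause ``under most reasonable representations,'' which forces the argument to remain robust to the precise layout. I would handle this by treating the representation abstractly: it suffices to assume that the presence/absence of a single potential edge is determined by a set of $O(1)$ memory words, and that two inputs differing only in the identity of one edge have representations differing in $O(1)$ words. Both adjacency lists (sorted or unsorted) and the ordered-pair array described in the preliminaries satisfy this, and the counting argument above then goes through verbatim.
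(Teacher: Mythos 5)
Your high-level strategy is the right one and matches the paper's: build a family of $\Theta(m)$ nearly-identical pairs of (YES, NO) instances, each pair differing in $O(1)$ memory cells, so that $o(m)$ probes must leave some distinguishing cells unread. However, your concrete construction has two genuine flaws.

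First, your witness crossing edge is a \emph{fixed} edge $e^\star = \{s,t\}$. In any reasonable representation, an algorithm can decide whether a single, pre-specified pair $\{s,t\}$ is present in $O(\log m)$ time (binary search in the sorted pair array, or a scan of $\s$'s adjacency list). In your instance family, the presence of $e^\star$ is exactly equivalent to the YES/NO answer, so your instances are actually easy and the adversary argument cannot get off the ground: any probe sequence that includes the $O(1)$ cells that would encode $\{s,t\}$ already commits the adversary, and swapping in $e^\star$ afterward is inconsistent with the transcript. The paper avoids this by making the location of the crossing edges itself uniformly random: the cross pair $\{u_1,v_2\},\{u_2,v_1\}$ is keyed to a random edge $\{u,v\}$ of the base graph $H$, giving $\Theta(m)$ distinct hiding spots.

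Second, your ``swap'' removes a chord $c$ internal to $A$ and inserts $\{s,t\}$, which changes the degree sequence (the endpoints of $c$ lose one, $\s$ and $\t$ gain one). For a sorted pair array this shifts $\Theta(m)$ entries, and for adjacency lists it changes list lengths and offsets, so the two representations do \emph{not} differ in $O(1)$ cells as your argument requires. The paper sidesteps this by swapping the \emph{pair} $\{u_1,v_1\},\{u_2,v_2\}$ for $\{u_1,v_2\},\{u_2,v_1\}$: every one of $u_1,u_2,v_1,v_2$ keeps the same degree, so (with suitable identifier assignment) only the four cells storing the opposite endpoint change, and these 4-cell sets, over all choices of $\{u,v\}$, partition the representation. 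Finally, a smaller point: the paper's argument is distributional (a random choice of instance, yielding the bound $p\cdot m/4$ on expected reads for success probability $\tfrac12+p$), which is what is needed against the randomized algorithms the paper constructs; your pure adversary argument as written only rules out deterministic algorithms.
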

\begin{proof}
Consider a generic instance of USTCON defined as follows. Take two disjoint copies of some $2$-edge-connected graph $H$ on $n/2$ nodes, with one distinguished node $x$. The two copies of $H$ are assigned the subscripts $1$ and $2$, respectively. Now, as the considered instance of USTCON we use, with probability $1/2$, the disconnected graph $H_1 \cup H_2$ with $\s=x_1$ and $\t=x_2$. Otherwise, we pick an edge $\{u,v\}$ of $H$ uniformly at random, and use as the instance the connected graph $H_1 \cup H_2 \cup \{\{u_2,v_1\},\{u_1,v_2\}\}\setminus \{\{u_1,v_1\},\{u_2,v_2\}\}$, likewise with $\s=x_1$ and $\t=x_2$. Subject to a choice of node identifiers in the representations, the connected and disconnected instances differ on precisely $4$ memory cells in the adjacency lists of the graph (for nodes $u_1$, $u_2$, $v_1$, and $v_2$), and these cells, taken over the choices of edge $\{u,v\}$, form a partition of the memory representation of the graph. Consequently, the expected number of memory reads for an algorithm deciding connectivity with probability $1/2 +p$ is lower-bounded by $p \cdot m/4$, and is thus $\Omega(m)$, within the range $n \leq m \leq n^2/8-O(n)$.
\end{proof}

\section*{Appendix B: Proofs of technical lemmas}


\subsection*{Proof of Lemma~\ref{lemiii}}

The interested reader may see this proof as an analogue of the discussion for short random walks in regular graphs, cf.~Aldous and Fill, Chapter 6, Proposition 16.

\emph{Claim} $(i)$: Consider a shortest path $P$ in graph $G$ from $i$ to a nearest vertex $j \in V\setminus A$. Let $P = (i_0, i_1, \ldots, i_{a}, j)$, where $i_0=i$, and $i_l \in A$,  for $0\leq l \leq a$. Let $G^\circ$ be the subgraph of $G$ induced by nodes from set $A$, their neighbors in $G$, and node $j$: $G^\circ = G [A \cup N(A) \cup \{j\}]$. Since any random walk in $G$ which starts from $i$ and does not enter $V\setminus A$ is confined to nodes and edges of graph $G^\circ$, we have the following relation between the walks $RW(G_1)$ and $RW(G^\circ_1)$:
\begin{equation}\label{eq1}
\E_i T_{V\setminus A} \leq \E_i T^\circ_{V\setminus A} = \E_i T^\circ_j < Com^\circ(i,j) = R^\circ(i,j)\sum_{e \in \vec E(G^\circ)} w(e),
\end{equation}
where the latter equality follows from the electrical network representation of random walks. The resistance $R^\circ(i,j)$ is upper-bounded by the resistance of the series connection going through the nodes of path $P$ in $G$:
$$
R^\circ(i,j) \leq \tfrac{1}{w(e_{i_0 i_1})} + \tfrac{1}{w(e_{i_1 i_2})} + \ldots + \tfrac{1}{w(e_{i_{a-1} i_a})} + \tfrac{1}{w(e_{i_{a} j})} =
$$
$$
= \max\{\deg(i_0),\deg(i_1)\} + \ldots + \max\{\deg(i_{a-1}),\deg(i_a)\} + \max\{\deg(i_{a}),\deg(i_j)\} < 2\sum_{l=0}^{a-1} \deg(i_l) + 2\Delta.
$$
Since the path $P_s = (i_0,i_1,\ldots,i_{a-1})$ is a shortest path in graph $G$ between nodes $i_0$ and $i_{a-1}$, such that $P_s \subseteq A$ and $\Gamma(P_s) \subseteq A$, it follows that (cf.~\cite{AF}):
$$
\sum_{l=0}^{a-1} \deg(i_l) \leq 3 |A|,
$$
and:
\begin{equation}\label{eq2}
R^\circ(i,j) < 6|A| + 2\Delta.
\end{equation}
Since the total weight of edges and self-loops of $G$ incident to a vertex in $V$ is equal to $1$, we have:
\begin{equation}\label{eq3}
\sum_{e \in \vec E(G^\circ)} w(e) \leq \sum_{v\in A} \left( \sum_{u \in \Gamma(v) \cup \{v\}} w(e_{vu}) \right) +  \sum_{u \in \Gamma(j) \cup \{j\}} w(e_{ju}) \leq |A| + 1.
\end{equation}
Claim $(i)$ follows from inequalities~\eqref{eq1},~\eqref{eq2}, and~\eqref{eq3}.

\medskip
\noindent
\emph{Claim} $(ii)$: Suppose that $s = \sqrt {6t} \geq \frac t n$, and let:
$$
A = \{j\in V : \E_i N_j(t) > s\}.
$$
Since the considered walk hits nodes from $V$ a total of (at most) $t$ times, we have $|A| < \frac{t}{s} \leq n$, and the considerations performed in the proof of Lemma~\ref{lemiii}$(i)$ can be applied for the above-defined set $A$.

First, we bound the expected number of returns to node $i$ for a walk starting at $i$ before reaching $V\setminus A$ for the first time:
$$
\E_i N_i (T_{V\setminus A}) = 1 + (1-{\Pr_i[T_{V\setminus A}<T_i]})\cdot \E_i N_i (T_{V\setminus A}) \implies \E_i N_i (T_{V\setminus A}) = \frac{1}{\Pr_i[T_{V\setminus A}<T_i]}.
$$
Taking into account~\cite{AF} (Chapter 3, eq. (28) and Corollary 11) and bound~\eqref{eq2}, we have:
$$
\E_i N_i (T_{V\setminus A}) = \frac{1}{\Pr_i[T_{V\setminus A} < T_i]} = \pi(i) \cdot R(i,j)\cdot\!\!\sum_{e \in \vec E(G)} w(e) \;\leq\; \pi(i) \cdot R^\circ(i,j)\cdot\!\!\sum_{e \in \vec E(G)} w(e) <
$$
\begin{equation}
\label{eq4}
< \frac{1}{n}\cdot (6|A|+2\Delta) n = 6|A|+2\Delta < 6 \frac{t}{s} + 2\Delta.
\end{equation}
It follows from Lemma~\ref{lemrev} that the definition of set $A$ may be rewritten as:
$$
A = \{j\in V : \E_j N_i(t) > s\}
$$
Thus, $V\setminus A = \{j\in V : \E_j N_i(t) \leq s\}$, which means that if a walk starting from $i$ reaches $V\setminus A$, it will return to $i$ at most $s$ times in expectation before time $t$. So, using~\eqref{eq4}, we obtain the claim:
$$
\E_i N_i(t) \leq \E_i N_i(T_{V\setminus A}) + s < 6\frac{t}{s} + s + 2\Delta = 2\sqrt{6t} + 2\Delta < 5\sqrt t + 2\Delta.
$$

\qed


\subsection*{Proof of Lemma~\ref{lemA}}

Fix an arbitrary arc $e_{ij}$, with $\{i,j\}\in E$. We will bound the sought probability from the inequality:
\begin{equation}\label{eq5}
\E_{\pi} N_{e_{ij}} (t) \leq \Pr_{\pi} [T_{e_{ij}} < t]\ \E_{e_{ij}} N_{e_{ij}} (t)
\implies
\Pr_{\pi} [T_{e_{ij}} < t] \geq \frac{\E_{\pi} N_{e_{ij}} (t)}{\E_{e_{ij}} N_{e_{ij}} (t)}.
\end{equation}
The expected number of traversals of $e_{ij}$ for a walk of even length starting from the stationary distribution on $V$ is given by equation~\eqref{eq6}.

To bound the expectation from the denominator of~\eqref{eq5}, we note that by Lemma~\ref{lemiii}$(ii)$,  $\E_i N_i (t)  < 5 \sqrt t + 2 \Delta$, and that arc $e_{ij}$ is chosen with probability $\min\left\{\frac{1}{\deg (i)}, \frac{1}{\deg(j)}\right\}$ during each visit to $i$:
$$
\E_i N_{e_{ij}} (t)  < (5 \sqrt t + 2\Delta) \min\left\{\frac{1}{\deg (i)}, \frac{1}{\deg(j)}\right\}.
$$
Considering a walk starting from a traversal of arc $e_{ij}$, we observe that after its traversal of  $e_{ij}$ the walk must return to node $i$ before traversing $e_{ij}$ again:
$$
\E_{e_{ij}} N_{e_{ij}} (t)  \leq  1 + \E_{j} N_{e_{ij}} (t) < 1 + \E_i N_{e_{ij}} (t) < 1 + (5 \sqrt t + 2 \Delta)  \min\left\{\frac{1}{\deg (i)}, \frac{1}{\deg(j)}\right\} \leq
$$
\begin{equation}\label{eq7}
\leq (5 \sqrt t + 3 \Delta) \min\left\{\frac{1}{\deg (i)}, \frac{1}{\deg(j)}\right\}.
\end{equation}
By combining inequalities~\eqref{eq6},~\eqref{eq5},~\eqref{eq7}, and taking into account that $t > \Delta^2$, we obtain the claim:
$$
\Pr_{\pi} [T_{ e_{ij}} < t] > \frac{t}{n (5\sqrt t + 3 \Delta)} > \frac{t}{8n \sqrt t} > 0.1 \frac{\sqrt{t}}{n}.
$$
\qed

\subsection*{Proof of Lemma~\ref{lemB}}

Pick a node $j \in V$ according to the uniform probability distribution $\pi$. We will bound the sought probability from the inequality:
\begin{equation}\label{eq5b}
\E_{i} N_{j} (t) \leq \Pr_{i} [T_{j} < t]\ \E_{j} N_{j} (t)
\implies
\Pr_{i} [T_{j} < t] \geq \frac{\E_{i} N_{j} (t)}{\E_{j} N_{j} (t)}.
\end{equation}
Taking into account Lemma~\ref{lemrev} and condition~\eqref{eqU}, and noting that $j$ is chosen according to the uniform distribution $\pi$ on $V$, we have:
\begin{equation}\label{eq6a}
\E_{i} N_{j} (t) = \E_{j} N_{i} (t) = \E_{\pi} N_{i} (t) = \frac{t}{n}.
\end{equation}
The expectation from the denominator of~\eqref{eq5b} is bounded by Lemma~\ref{lemiii}$(ii)$, $\E_{j} N_{j} (t) < 5\sqrt t + 2\Delta$. By combining the above relations, and taking into account that $t > \Delta^2$, we obtain:
$$
\Pr_{i} [T_{j} < t] > \frac{t}{n (	5\sqrt t + 2 \Delta)} > \frac{t}{7n \sqrt t} > 0.1 \frac{\sqrt{t}}{n}.
$$
\qed

\subsection*{Proof of Lemma~\ref{lemGood}}

Fixing a connected component $H \subseteq G$ with $n_H \geq n_p/6$, we introduce the following notation for a set of landmarks $L$:
\begin{itemize}
\item let $L_H = L\cap V(H)$,
\item let $X(L)$ denote the event that $|L_H| \geq \frac{1}{2} p \frac{n_H}{n}$,
\item let $F_1(L)$ be the random variable over $L$ describing the maximum, over all non-loops arcs $e$ belonging to $H$, of the probability that a set of $p$ random walks $RW(G_1)$ of length $\tau = n_p^2$ each, with one random walk originating from each landmark from $L$, does not cover arc $e$.
\item let $F_2(L)$ be the random variable over $L$ describing the maximum, over all nodes $u \in V(H)$, of the probability that a random walk $RW(G_1)$ of length $\tau = n_p^2$, originating from $u$, does not hit any landmark of $L$.
\end{itemize}
Suppose that $L$ is a set of $p$ nodes picked according to the uniform distribution $\pi^p$ on $V^p$. To prove the claim of the Lemma, we need to show the following bound:
\begin{equation}\label{eq11}
\Pr_{L\sim \pi^p} [F_1 > n^{-1} \wedge  F_2 > n^{-1}] <\frac{1}{2n}.
\end{equation}
We observe that each landmark from $L$ belongs to $V(H)$ with probability $n_H/n$. Let $L_H = L\cap V(H)$. A w.h.p.\ lower bound on the size of $L_H$ follows from the Chernoff bound applied to $p$ Bernoulli trials with success probability $n_H/n$:
\begin{equation}\label{eq12}
\Pr_{L\sim \pi^p} [ X ] \ \geq 1 - e^{-\frac{1}{8}p\frac{n_H}{n}} \geq 1 - e^{-\frac{1}{48}p\frac{n_p}{n}}\geq 1 - e^{-\frac{\gamma}{48}\log n} > 1 - \frac{1}{4n},
\end{equation}
where we took into account that $n_p \geq \gamma \frac{n}{p} \log n$, and that $\gamma=60 > 48$. In the following, we only need to show that, conditioned on the event $X(L)$ holding, $L$ is a good set of landmarks with probability $1 - \frac{1}{4n}$. Note that all the landmarks from $L_H$ are distributed uniformly at random on $V(H)$, also when conditioned on $X(L)$.

To bound $F_1(L)$, fix a non-loop arc $e$ of $H$ as the arc maximizing the failure probability in the definition of $F_1(L)$. By applying Lemma~\ref{lemA} to graph $H$, the probability that a walk $RW(H_1)$ of length $\tau=n_p^2$, starting from the uniform distribution on $V(H)$, does not cover arc $e$, is at most $1 - \frac{0.1 n_p}{n_H}$. Thus, considering that:
$$
|L_H| \geq \frac{1}{2} p \frac{n_H}{n} = \frac{n_H \cdot 3 \log n}{6 \frac n p\log n} \geq \frac{n_H \cdot 3 \log n}{0.1 n_p},
$$
the probability $F_{1,e}(L)$ that no random walk starting from a landmark hits arc $e$ is bounded by:
$$
\E_{L\sim \pi^p} \left[F_{1} \ \big|\ X \right] < \left(   1 - \frac{0.1 n_p}{n_H}  \right)^{\frac{n_H}{0.1 n_p} 3 \log n} < 2^{-3 \log n} < n^{-3}.
$$

Likewise, to bound $F_2(L)$, fix a node $u \in V(H)$ maximizing the probability that a walk $RW(G_1)$ of length $\tau = n_p^2$, originating from $u$, does not hit any landmark of $L$. By Lemma~\ref{lemB}, the probability that the considered walk of length $\tau$ does not cover a node chosen according to the uniform distribution on $V(H)$, is at most $1 - \frac{0.1 \sqrt{\tau}}{n_H}$. Thus, taking into account that $|L_H| > \frac{n_H}{0.1 \sqrt{\tau}} 3 \log n$, the probability that the walk does not hit any landmark can once again be bounded as less than $n^{-3}$:
$$
\E_{L\sim \pi^p} \left[F_2\ \big|\ X \right] < n^{-3}.
$$
It follows that:
$$
\E_{L\sim \pi^p} \left[F_1 + F_2\ \big|\ X \right] < 2n^{-3},
$$
and by the Markov bound:
\begin{equation}\label{eq13}
\Pr_{L\sim \pi^p} [F_1 + F_2> n^{-1}\  \big|\ X  ] < \frac{2}{n^2} < \frac{1}{4n}.
\end{equation}
Now, inequalities~\eqref{eq12} and~\eqref{eq13} imply that inequality~\eqref{eq11} holds, which completes the proof.
\qed

\subsection*{Proof of Proposition~\ref{probla}}

We begin by observing that the unbiased random walk on $G$ can be described as a weighted Metropolis-Hastings walk $RW(G_{f_c})$, where, for all $u\in V$, the potential function on nodes is given as $f_c(u) = c\deg(u)$, where $c>0$ is an arbitrarily chosen constant of proportionality ($w(e) = c$ for all edges). Now, looking at the electrical networks analogy, by identifying with each other the corresponding nodes of the electrical networks describing the walks $RW(G_{f_c})$ and $RW(G_1)$, and leaving the edges of both these networks in parallel connection, we obtain a new network on $G$ with edge weights $w_f$ given by:
$$
w_f(e) = w_{f_c}(e) + w(e),
$$
corresponding to the potential function on nodes:
$$
f(u) = f_c(u) + 1 = c\deg(u) + 1.
$$
It follows that the resistance of replacement of the network of $RW(G_f)$ for any two nodes $u,v\in V$ can be bounded as:
$$
R_{G_f}(u,v) \leq R_{G_{f_c}}(u,v) \quad \text{and} \quad R_{G_f}(u,v) \leq R_{G_1}(u,v).
$$
Moreover, the following relations hold between resistances and commute times:
$$
Com_{G_{f_c}}(u,v) = 2cm R_{G_{f_c}}(u,v)
$$
$$
Com_{G_1}(u,v) = n R_{G_1}(u,v)
$$
$$
Com_{G_{f}}(u,v) = (2cm + n) R_{G_f}(u,v)
$$
Fixing $c = \frac{1}{d} = \frac{n}{2m}$, i.e., $2cm = n$, we obtain from all of the above relations:
\begin{equation}\label{eqXA}
Com_{G_f}(u,v) = O(\min\{Com_{G_{f_c}}(u,v),Com_{G_1}(u,v)\}).
\end{equation}
\qed
\end{document}